\documentclass[12pt, oneside]{article}
\usepackage[utf8]{inputenc}
\usepackage[english]{babel}
\usepackage{amsfonts}
\usepackage{amsmath}
\usepackage{amsthm}
\usepackage{MnSymbol}
\usepackage{wasysym}
\usepackage{hyperref}
\usepackage{mathtools}
\usepackage{multicol}
\setlength{\columnsep}{1cm}
\usepackage[letterpaper,margin=2.5cm,left=3cm,right=2.5cm, includehead]{geometry}
\usepackage{graphicx}
\usepackage{float}
\usepackage{enumitem}
\usepackage{setspace}
\usepackage{mathrsfs}
\usepackage{pdfpages}
\usepackage{titling}
\usepackage{tocbibind}
\usepackage[spanish]{babelbib}
\usepackage{xcolor}

\usepackage{eso-pic}

\newcommand{\htau}{\hat{\tau}}

\newcommand{\calT}{\mathcal{T}}

\newtheorem{theorem}{Theorem}

\newtheorem{lemma}{Lemma}
\newtheorem{proposition}{Proposition}
\newtheorem{corollary}{Corollary}
\newtheorem{example}{Example}
\newtheorem{remark}{Remark}

\begin{document}

\title{The causal structure of the c-completion of warped spacetimes}
\author{Luis Aké Hau, Saul Burgos and Didier A. Solis}
\maketitle

\begin{abstract}
In this work we study the structure of the  future causal completion $\hat{M}$ of a globally hyperbolic GRW spacetime $\mathbb{R}\times_\alpha M$ using the novel notion of Lorentzian pre-length spaces. As our main result, we prove that the causal completion of a GRW spacetime is a globally hyperbolic pre-length space provided the chronological topology is Hausdorff.
\end{abstract}

\section{Introduction}

Understanding the ultimate fate of the universe has been one of the driving forces behind cosmology. In the modern context of general relativity this translates to the study of a structure that encodes the information carried along by massive observers and massless particles,  such as photons, whose wordlines are represented, respectively, by (future) timelike and lightlike inextendible curves in spacetime. One of the most successful alternatives for modeling the end of spacetime is the \textit{conformal boundary} first introduced by R. Penrose \cite{zero}. This construction is inspired in the classical set up of projective geometry where ideal points (or ``points at infinity") are attached at the end of inextendible curves. This is achieved by considering a conformal embedding  $i:(M,g)\to (\bar{M},\bar{g})$, $i^*\bar{g}={\Omega}^2g$ of the spacetime $(M,g)$  in a way that $\partial i(M)=\Omega^{-1}(0)$. Due to the conformal invariance of lightlike geodesics under conformal changes of the metric, those future lightlike geodesics of $(M,g)$ that acquire a future endpoint in $\partial i(M)$ are indeed future complete, and hence its endpoints can be considered as points at infinity. Among the most remarkable features of the causal boundary is that it allows to provide a sound mathematical definition of black hole regions \cite{HE}. Although elegant from the mathematical point of view, this approach has the serious drawback of not being intrinsic, as it requires of a concrete embedding in order to construct the conformal boundary $\mathfrak{I}:=\partial i(M)$. As a consequence, to this day there does not exist a general criterion in order to know if a given spacetime admits a conformal boundary, or a canonical way to construct such and embedding in case it exists.

A different alternative to explore the end of spacetime was developed by R. Geroch, E. Kronheimer and R. Penrose \cite{gkp}. The so called \textit{future causal completion} $\hat{M}$ of a spacetime $(M,g)$ is constructed by assigning an ideal point to all observers having a common past, and hence sharing the same observer horizon. In a time dual way a past causal completion $\check{M}$ is defined. Finally, the \textit{causal boundary} arises as a quotient of $(\hat{M}\cup \check{M})\setminus M$, when a series of non-trivial identifications take place (see for instance \cite{flores:revisited,onthefinal} and references therein for detailed accounts on the necessity of such identifications). The causal boundary, besides being manifestly intrinsic --as it only depends on the causal structure of $M$-- it also coincides, under mild additional assumptions, with the conformal boundary, when the latter exists \cite{onthefinal}. Moreover, it also carries enough structure to provide a formalism for the study of black holes in a general setting \cite{bhcausal}.

Causal completions do not have a manifold structure, thus they can not be furnished with a spacetime structure in the classical way. However, they can be endowed with diffe\-rent topologies \cite{beemclt,harrischrtop,floresharris,clt,Olaf}, a chronological structure \cite{harrisuniversality} and even linear co\-nnections \cite{harriscon}. In very recent times there has been a great interest to explore alternative frameworks for general relativity that do not rely on $C^2$ Lorentzian manifolds. Commonly termed as non regular geometric approaches, they have provided solid grounds for generalizing some of the landmark features of relativity, such as causality \cite{mingucone,piotrc0,ling01} and singularity theorems \cite{Graf3,Kun3}. Among these alternatives there are some whose formulations do not require a metric tensor at all \cite{KSlls,lorentzmetric}, and hence are suitable candidates to study  causal completions and bring a better understanding on observational data corresponding to scenarios --like black hole merging and gravitational waves \cite{ebh,ligo}-- in which $C^2$ regularity might not be guaranteed. In fact, in \cite{complls} it was shown that the future causal completion of a globally hyperbolic spacetime admits a Lorentzian pre-length space structure, as defined in \cite{KSlls}.

 The globally hyperbolic stage of the causal ladder of spacetimes is particularly important in General Relativity as it is the natural environment for the global well posedness of the Einstein equations, due to Choquet Bruhat and Geroch \cite{choque-bruhat, choquet-bruhat_geroch}; and is also an essential assumption in the Singularity Theorems of Hawking and Penrose \cite{HE, penrose_sing}. In Lorentzian length spaces, global hyperbolicity has also proven to be important. A series of important results and problems can be extended to this generalized setting: the Lorentzian splitting theorem \cite{splitting}, time functions and the topological splitting by Geroch \cite{timefunctionsLLS}, and Bartnik's splitting conjecture \cite{bartnik}. Moreover, it is the natural place to study convergence of Lorentzian spaces \cite{mondino-samann}. 

In this work we specialize the findings of \cite{complls} to a cosmological setting. Thus we explore in greater detail the causal structure of the future causal completion of generalized Robertson-Walker spacetimes, thus establishing the basic features for the particular case of FLRW spacetimes. This work is organized as follows: in Section \ref{sec:pre}  we introduce the notation and terminology that will be used throughout this work, as well as the basic notions pertaining Lorentzian pre-length spaces. In Section \ref{sec:CLpLS} we address the structure of the future causal completion of a GRW spacetime. Finally, in Section \ref{sec:causality} we describe its causal properties and prove that it is a globally hyperbolic Lorentzian pre-length space when its chronological topology is Hausdorff.

\section{Preliminaries}\label{sec:pre}

\subsection{Causal completions}

Let us consider a strongly causal spacetime $(M,g)$, that is, a connected time oriented $n$ dimensional Lorentzian manifold. We define the \textit{chronological} (\textit{causal}) relation $\ll$ $(\le )$ as $p\ll q$ $(p\le q)$ if there exists a future timelike (\textit{causal}) curve from $p$ to $q$\footnote{By convention we set $p\le p$, for all $p\in M$.}, and the \textit{chronological} (\textit{causal}) sets accordingly:
\begin{eqnarray*}
I^{+}(p) =\{q \in M \mid p \ll q\}, &\qquad& J^{+}(p)=\{q \in M \mid p \leq q\},\\
I^{-}(p) =\{q \in M \mid q \ll p\}, &\qquad& J^{-}(p)=\{q \in M \mid q \leq p\}.
\end{eqnarray*}
Further, we say that a sequence of points $\{x_{n}\}$ is a \textit{future-directed chain} if $x_{n} \ll x_{n+1}$ for all $n\in\mathbb{N}$. \textit{Past directed chains} are defined time dually.  An \textit{inextensible chain} is characterized by its non convergence with respect to the manifold topology.

 A subset $P \subseteq M$ is called \textit{past set}  if it coincides with its chronological past, that is,  $P=I^{-}(P)$. Such a set is \textit{indecomposable} (or \textit{IP} for short) if $P$ can not be expressed as the union of two proper past subsets of $P$. A classical result (see Thrms. 2.1 and 2.3 in \cite{gkp}) establishes that there are two mutually exclusive classes of IPs: the ones of the form $I^-(p)$, $p\in M$, called \textit{proper} (or \textit{PIP}) and those corresponding to the chronological past of an inextensible future chain $\{x_n\}$, called \textit{terminal}  (or \textit{TIP}). The \textit{future causal completion} $\hat{M}$ is then defined as the set of all IPs. Notice that the sets of PIPs can be naturally identified with $M$ while the TIPs correspond to ideal points, and thus conform the \textit{future causal boundary} $\hat{\partial} M$. Hence we have the following decomposition:
\[	
IPs \equiv PIPs \cup TIPs, \quad	
\hat{M}\equiv  M \cup \hat{\partial}M.
\]
Taking time duals allows us to define the past causal boundary $\check{M}$ and leads to the identification $\overline{M}=M\cup \check{\partial}M$. In order to build a causal completion out of $\tilde{M}=\hat{M}\cup \check{M}$ we first note that $\tilde{M}$ contains two copies of $M$ that ought to be identified. Actually, more non-trivial identifications have to take place in order to avoid inconsistencies and ultimately define the so-called \textit{causal boundary} of $M$. The analysis required for these identifications is very subtle and we will not pursue it here, as we will focus only on the future causal boundary $\hat{M}$\footnote{For globally hyperbolic spacetimes these identifications become trivial \cite[3.29]{onthefinal}.}. A thorough discussion on the topic and can be found in \cite{onthefinal}.

\subsection{Lorentzian pre-length spaces}

The quest of establishing an axiomatic approach to causality has been present since the early years of mathematical relativity. Indeed, one of the first systematic attempts to axiomatize the most basic aspects of causality is the theory of \textit{causal spaces} due to E. Kronheimer and R. Penrose \cite{kpcs},  that serves as foundation to different developments, including some approaches to quantum gravity (see \cite{surya} and references therein). More recently, the growing interest of finding synthetic geometrical methods to deal with non-regular scenarios has led to the introduction of different frameworks \cite{lorentzmetric,Olaf2}. In particular, the notion of Lorentzian pre-length space as introduced by M. Kunzinger and C. S\"amman has proved very useful in the analysis of singularities \cite{KSlls}, extendibility of spacetimes \cite{GKSinext}, and most remarkably in setting the basis for synthetic Lorentizian comparison geometry \cite{solismontes,beran01} and Lorentzian geometric measure theory \cite{Mcann}. Among this very active area of research we have applications to contexts related to relativity \cite{complls,ACS} as well as contact geometry \cite{hendike} and hyperspace theory \cite{hyper}.

A Lorentzian pre-length space  $(X,d,\ll,\leq, \tau)$ is a metric space $(X,d)$ furnished with two relations $\ll$, $\leq$,  and a function $\tau: X\times X\to [0,\infty]$ that satisfy the following axioms:
\begin{enumerate}
\item $\leq$ is a pre-order,
\item  $\ll$ is a transitive relation contained in $\leq$,
\item $\tau$ is lower semi-continuous  ---with respect to $d$--- satisfying
\begin{itemize}
    \item $\tau(x,z)\geq \tau(x,y) + \tau(y,z)$ for all $x\leq y \leq z$,
    \item $\tau(x,y)>0$ if and only if $x\ll y$.
\end{itemize}
\end{enumerate}
In this context, the relations  $\ll$ and $\le$ are called \textit{chronological} and \textit{causal}, respectively, while $\tau$ is the \textit{time separation function}. In fact every smooth  spacetime is a Lorentzian pre-length space with respect to a metric $g$ that generates its manifold topology and $\ll$, $\le$, $\tau$ its standard relations and time separation (see Example 2.11 in \cite{KSlls}).

As easy consequences of the definition we have that the chronological sets $I^+(p)$, $I^-(p)$ are open, in accordance to the classical smooth Lorentzian setting. In addition, the \textit{push up} property holds: 
\[ 
x\leq y\ll z\ \textrm{or}\ x\ll y\leq z\ \textrm{implies}\ x\ll z.
\]
Lorentzian pre-length spaces provide enough structure to build an abstract causal theory that mimic the classical causality of smooth spacetimes \cite{KSlls}. For instance, we say that a Lipschitz curve $\gamma : I\subset\mathbb{R}\to (X,d)$ is \textit{future timelike} (\textit{causal}) if $t<s$ $(t\le s)$ implies $\gamma (t)\ll \gamma (s)$ $(\gamma (t)\le \gamma (s))$ for all $t,s\in I$. Past timelike and causal curves are defined analogously.

In particular, we say that a Lorentzian pre-length space $(X,d,\ll,\leq, \tau)$ is \textit{globally hyperbolic} if the following two properties are met
\begin{enumerate}
    \item The causal diamonds $J^+(p)\cap J^-(q)$ are compact for every $p\le q$.
    \item For every compact subset $K\subset X$, any causal curve contained in $K$ has bounded $d$-length.\footnote{This notion is commonly referred to as \textit{non total imprisonment}.} 
\end{enumerate}
A set $A\subset X$ is called \textit{causally convex} if any causal curve with endpoints in $A$ is fully contained in $A$. Followiing \cite{Olaf}, we say that $X$ is \textit{almost strongly causal} at $p\in X$ if any neighborhood $U$ of $p$ contains a causally convex neighborhood $V\ni p$. Recall that for smooth spacetimes this notion is equivalent to the fact that the Alexandrov sets $\{I^+(x)\cap I^-(y)\}$ generate the manifold topology. However, in the Lorentzian pre-length space setting the latter notion is stronger than the former.

\section{Causal completions as Lorentzian pre-length spaces}\label{sec:CLpLS}

Now we describe a way to provide the future causal completion $\hat{M}$ of a globally hyperbolic smooth spacetime $(M,g)$ with a Lorentzian pre-length structure that naturally extends its spacetime causality. The first step consists in defining relations $\hat{\ll}$ and $\hat{\le}$ on $\hat{M}$ as follows\footnote{This relation was suggested by S. Harris in \cite{harrischrtop}.}:
 	\begin{align*}\label{eq2}
	P \hat{\ll} Q &\Leftrightarrow \exists q \in Q\setminus P {\text{ such that }} P \subset I^{-}(q)  \\
	P \hat{\leq} Q & \Leftrightarrow P \subset Q.
	\end{align*}
Note that if $P \hat{\ll} Q \hat{\ll} R$ then there exist $q \in Q \setminus P$ and $r \in R \setminus Q$ with $P \subset I^{-} (q) \subset Q \subset I^{-}(r)$, thus $P \hat{\ll} R$. Hence $\hat{\ll}$ is transitive. The remaining properties in the definitions of chronological and causal structure follow immediately.

\begin{example}\label{lemma:j+compcausallyconvex}
Notice that the sets $(\hat{J}^{\pm} (P))^{c}$, where the complement is taken in $\hat{M}$, are causally convex for every $P \in \hat{M}$. Indeed, let $P \in \hat{M}$ and $\gamma : [a,b] \to \hat{M}$ be a future causal curve with $\gamma (a), \gamma (b) \in (\hat{J}^{\pm} (P))^{c}$. Proceeding by contradiction assume that there exists $c \in (a,b)$ with $\gamma (c) \in \hat{J}^{\pm} (P)$. Since $\gamma$ is future causal, then
$P \hat{\leq} \gamma (c) \hat{\leq} \gamma (b)$, which implies $ \gamma (b) \in \hat{J}^{\pm} (P)$. In a similar fashion, the sets of the form  $\hat{I}^{+}(P) \cap (\hat{J}^{-} (Q))^{c}$ 
are causally convex.
\end{example}

The limit operator $\hat{L}_{chr}$ on the sequence  of IPs given by
\begin{equation}
\label{eq3}
P \in \hat{L}_{chr}(\{P_{n}\}) \Leftrightarrow P \subset LI(\{P_{n}\}) \text{ and it is maximal in } LS(\{P_{n}\}).\footnote{The set theoretical inferior and superior limits of subsets are defined as $LI(\{P_{n}\})=\bigcup_{n=1}^{\infty}\bigcap_{m=n}^{\infty} P_{m}$ and $LS(\{P_{n}\})=\bigcap_{n=1}^{\infty} \bigcup_{m=n}^{\infty} P_{m} $, respectively.}
\end{equation}
defines the \textit{future chronological topology} $\hat{\mathcal{T}}_{chr}$ by its closed subsets as follows: a subset $C \subset \hat{M}$ is closed if and only if  $\hat{L}_{chr}(\sigma) \subset C$ holds for any sequence $\sigma \subset C$. Notice that the chronological sets $\hat{I}^\pm (P)$ are open in the $\hat{\mathcal{T}}_{chr}$ topology. There are examples of globally hyperbolic spacetimes for which the topology $\hat{\mathcal{T}}_{chr}$ is not Hausdorff, and hence the topological space $(\hat{M},\hat{\mathcal{T}}_{chr})$ is not metrizable.  For such an example refer to Harris' ``unwrapped grapefruit on a stick" spacetime \cite{floresharris}.

On the other hand, the so called \textit{closed limit topology} $\hat{\mathcal{T}}_{c}$ (or CLT for short) is metrizable. First introduced by J. Beem \cite{beemclt} and later extensively studied by I. Costa, J. Flores and J. Herrera \cite{clt}, this topology is defined in terms of the open Hausdorff limit operator
\begin{equation}
\hat{L}_{H}(\{P_{n}\})=\{P \in \hat{M} \mid P=LI(\{P_n\})=LS(\{P_{n}\})\}
\end{equation}
by defining the closed sets in $\hat{\mathcal{T}}_{c}$ as follows: a subset $C \subset \hat{M}$ is closed if and only if $\hat{L}_{H}(\{P_{n}\}) \subset C$ for every sequence $\{P_{n}\} \subset C$.

In general, $\hat{\mathcal{T}}_{chr}\subset \hat{\mathcal{T}}_{c}$, however, they coincide when their corresponding limit operators agree. Moreover, this occurs if and only if $\hat{\mathcal{T}}_{chr}$  is Hausdorff (see \cite[Thrm. 5.3]{clt}). The next result summarizes the main properties pertaining the CLT topology in globally hyperbolic spacetimes (refer to \cite[Thrms. 4.1, 4.2]{clt}).
\begin{theorem}\label{teo:clt}
If $(M,g)$ is a globally hyperbolic spacetime, then the following statements hold for the topological space $(\hat{M},\hat{\mathcal{T}}_{c})$:
\begin{itemize}
	\item[(i)] The natural inclusion $i: (M,g) \rightarrow (\hat{M},\hat{\mathcal{T}}_{c})$ given by $i(p)=I^{-}(p)$ is an open continuous map. Moreover, $i(M)$ is an open dense subset of $\hat{M}$, the induced topology on $M$ is the manifold topology, $\hat{\partial} M$ is closed and $(\hat{M},\hat{\mathcal{T}}_{c})$ is second countable.
	\item[(ii)] The chronological sets $\hat{I}^{\pm}(P)$ are open subsets for all $P \in \hat{M}$.
	\item[(iii)] Any future directed chain $\{P_{n}\} \subset \hat{M}$ converges in $\hat{\mathcal{T}}_{c}$
		\item[$(iv)$] The topological space $(\hat{M},\hat{\mathcal{T}}_{c})$ is metrizable.
\end{itemize}
\end{theorem}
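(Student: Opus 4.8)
This statement collects \cite[Thrms.~4.1 and 4.2]{clt}, so the plan is to indicate how each item is obtained, using freely that a globally hyperbolic spacetime is strongly causal, causally continuous, and satisfies the limit curve theorem. The technical heart, and the step I expect to be the main obstacle, is (iv) together with the second countability asserted in (i). The plan is to realize $(\hat{M},\hat{\mathcal{T}}_{c})$ as a subspace of the hyperspace $\mathrm{Cl}(M)$ of closed subsets of $M$ equipped with the topology of closed (Fell) convergence: since $M$ is a smooth manifold it is locally compact, $\sigma$-compact and second countable, so $\mathrm{Cl}(M)$ with that topology is compact and metrizable, and every subspace of it is metrizable and second countable. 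Concretely one sends an IP $P$ to its closure $\overline{P}$ in $M$ (IPs being open, not closed) and then checks the two delicate points: that this assignment is injective on the IPs of a globally hyperbolic spacetime, and that the closed-limit structure determined by the operator $\hat{L}_{H}$ on IPs-as-subsets corresponds, under this identification, to closed convergence in $\mathrm{Cl}(M)$. Both verifications are the substance of \cite{clt}; I would invoke that argument, whence (iv) and the second countability in (i) follow at once.

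With metrizability in hand the remaining items are comparatively soft. Item (ii) is immediate, since the chronological sets $\hat{I}^{\pm}(P)$ are already open in $\hat{\mathcal{T}}_{chr}$ and $\hat{\mathcal{T}}_{chr}\subseteq\hat{\mathcal{T}}_{c}$. For item (iii), a future directed chain $\{P_{n}\}$ satisfies $P_{n}\subseteq P_{n+1}$ (as $P_{n}\,\hat{\ll}\,P_{n+1}$ gives $P_{n}\subset I^{-}(q)\subseteq P_{n+1}$ for some $q\in P_{n+1}$), so $P:=\bigcup_{n}P_{n}$ is a past set which, being an increasing union of IPs, is itself an IP by \cite{gkp}; one then computes $LI(\{P_{n}\})=LS(\{P_{n}\})=P$, and likewise for every subsequence, so $P\in\hat{L}_{H}(\{P_{n}\})$ and, the topology being metrizable, $P_{n}\to P$ in $\hat{\mathcal{T}}_{c}$. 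Applying this to a future chain $x_{n}\ll x_{n+1}$ cofinal in a TIP shows $i(x_{n})\to I^{-}(\{x_{n}\})$, so $i(M)$ is dense in $(\hat{M},\hat{\mathcal{T}}_{c})$.

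It remains to treat the inclusion $i$ in item (i). Injectivity is past-distinguishingness, and continuity follows from causal continuity, which makes the assignment $p\mapsto\overline{I^{-}(p)}$ continuous for the closed-convergence topology used above. That $\hat{\partial}M$ is closed, equivalently that $i(M)$ is open, is where the limit curve theorem enters: a $\hat{\mathcal{T}}_{c}$-limit of TIPs $P_{n}=I^{-}(\gamma_{n})$ that happened to equal a PIP $I^{-}(p)$ would, upon extracting a limit curve from the $\gamma_{n}$, produce a future inextensible curve accumulating at $p$, contradicting inextensibility in the manifold topology. Openness of $i$ then follows by sandwiching any manifold-open $U\ni p$ between a compact causal diamond $J^{+}(p^{-})\cap J^{-}(p^{+})\subset U$, so that $i(M)\cap\hat{I}^{+}(i(p^{-}))\cap\hat{I}^{-}(i(p^{+}))\subseteq i(U)$ is open by (ii) and the openness of $i(M)$; finally, combining continuity and openness of $i$ shows that the topology induced on $i(M)$ is the manifold topology, which completes (i).
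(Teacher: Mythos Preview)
The paper does not prove this theorem: it is stated as a summary of results from \cite{clt}, with the line ``refer to \cite[Thrms.\ 4.1, 4.2]{clt}'' immediately preceding it and no proof following. Your proposal is therefore already more than what the paper itself provides, and since you also defer the substantive steps (injectivity of $P\mapsto\overline{P}$ on IPs and the match between $\hat{L}_H$ and Fell convergence) to \cite{clt}, your approach is aligned with the paper's treatment. As a sketch of what underlies those cited theorems, your outline is reasonable; no discrepancy with the paper to report.
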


It is important to notice that the metric $d_c$ that induces $\hat{\calT}_c$ can be expressed in terms of a non-atomic strictly positive finite Radon Borel measure as $\mu$ as\footnote{Here $A\triangle B$ stands for the symmetric set difference $(A\setminus B)\cup (B\setminus A)$.}
\[
d_c(A,B)=\mu (A\triangle B),
\]
The above was originally claimed without proof by Beem himself (see \cite[Thrm. 8]{beemclt}). A detailed analysis carried out in \cite{Olaf} establishes this fact and provides a refined description of $(\hat{X},d_c)$ as an intrinsic and proper metric length space \cite[Thrms. 8 and 10]{Olaf}. 

\begin{remark}\label{rem:instrinsic}
Actually, if $A\hat{\leq} B\hat{\leq}C$ then $d_c(A,C)=d_c(A,B)+d_c(B,C)$, thus any future causal curve joining $A$ to $C$ is a $d_c$ distance realizer.
\end{remark}

Thus in order to endow the future causal completion $\hat{M}$ of a globally hyperbolic spacetime with a Lorentzian pre-length space structure, we only require to define a time se\-paration $\hat{\tau}$ that is lower semi-continuous with respect to the metric $d_c$ that generates the topology $\hat{\mathcal{T}}_{c})$. This can be accomplished by considering the following function (see \cite[Thrm. 12]{complls}).
\[
\hat{\tau} (P,Q)=
\begin{cases}
0 & P\in \hat{\partial}M,\\
\sup_n \{ \tau (p,q_n) \} & P=I^{-}(p),\\
\end{cases}
\]
where $\{q_n\}$ is a timelike chain generating $Q=I^{-}(\{ q_n \} )$.

We now state a couple of results that will be used extensively in the following sections. The first one is standard  (see for example \cite[Prop. 2.11]{clt} or \cite[Prop. 4]{complls}).

\begin{lemma}\label{lemadesigualdad}
Let $(M,g)$ be is a globally hyperbolic spacetime. If $P \in \hat{\partial} M$ then for any $p\in M$ we have $P\hat{\nleq} I^-(p)$.  Equivalently, if  $P \in \hat{\partial} M$ and $P \hat{\leq} Q$, then $Q \in \hat{\partial} M$. 
\end{lemma}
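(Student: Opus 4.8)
The plan is to prove the contrapositive-friendly second formulation first and then read off the first. So suppose $P\in\hat\partial M$ and $P\hat\le Q$, i.e. $P\subset Q$; I want to show $Q\in\hat\partial M$, equivalently that $Q$ is not a PIP. Assume for contradiction that $Q=I^-(q)$ for some $q\in M$. Since $P$ is a TIP, by the Geroch--Kronheimer--Penrose structure theorem there is an inextensible future chain $\{x_n\}$ with $P=I^-(\{x_n\})=\bigcup_n I^-(x_n)$. Because $P\subset Q=I^-(q)$, every $x_n$ lies in $I^-(q)$, hence $x_n\ll q$ for all $n$.

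Now I would exploit global hyperbolicity. Fix any $x_1$; then the chain $\{x_n\}$ (after discarding finitely many terms so that $x_1\le x_n$, which one can arrange since the chain is future-directed and transitive) is contained in the causal diamond $J^+(x_1)\cap J^-(q)$, which is compact. A causal chain in a compact set must have a limit point $x_\infty$; in a strongly causal (indeed globally hyperbolic) spacetime a monotone future chain in a compact set actually converges, $x_n\to x_\infty$. Alternatively, and more robustly, one invokes the non-total-imprisonment / limit-curve machinery: the broken causal curve through the $x_n$ has a limit that must escape any compact set unless the chain converges, forcing convergence of $\{x_n\}$ to some $x_\infty\in M$. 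Either way, $\{x_n\}$ converges in the manifold topology, contradicting the inextensibility of the chain that defines the TIP $P$. Hence $Q$ cannot be a PIP, so $Q\in\hat\partial M$.

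For the first formulation, apply what was just shown with $Q=I^-(p)$: if we had $P\hat\le I^-(p)$ with $P\in\hat\partial M$, then $I^-(p)\in\hat\partial M$, i.e. $I^-(p)$ would be a TIP; but $I^-(p)$ is a PIP, and PIPs and TIPs are mutually exclusive classes, a contradiction. Therefore $P\hat{\nleq} I^-(p)$. Conversely the first statement gives the second: if $P\in\hat\partial M$ and $P\hat\le Q$ but $Q=I^-(p)$ were a PIP, that directly contradicts $P\hat{\nleq}I^-(p)$; so $Q\in\hat\partial M$. Thus the two formulations are equivalent, as claimed.

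The main obstacle is the step asserting that a future-directed chain trapped in a compact causal diamond must converge in the manifold topology: one must be careful to distinguish ``has a limit point'' from ``converges,'' and to cite the correct fact (in a globally hyperbolic — or even just strongly causal — spacetime a monotone causal sequence confined to a compact set converges to a point of $M$, which then serves as a future endpoint of the generating chain, contradicting inextensibility). This is exactly where global hyperbolicity enters, and it is worth stating cleanly rather than waving at ``compactness.'' Everything else is bookkeeping with the GKP dichotomy PIP/TIP and the definitions of $\hat\ll,\hat\le$.
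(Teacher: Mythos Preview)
Your argument is correct and is precisely the standard proof: the paper does not give its own argument for this lemma, stating instead that it is standard and citing \cite[Prop.~2.11]{clt} and \cite[Prop.~4]{complls}. Your use of the compactness of $J^+(x_1)\cap J^-(q)$ together with strong causality to force convergence of the generating chain (hence contradicting the TIP property) is exactly the intended mechanism; the only polishing needed is to make explicit that a convergent subsequence plus causal convexity of small neighborhoods upgrades to convergence of the full monotone chain, which you already flag.
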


\begin{lemma}\label{lemma:i+causallyconvex}
The sets $\hat{I}^{\pm} (P)$ are causally convex (when non-empty) for every $P  \in \hat{M}$.
\end{lemma}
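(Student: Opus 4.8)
The plan is to mimic the argument already given for $(\hat J^\pm(P))^c$ in Example \ref{lemma:j+compcausallyconvex}, but in the opposite direction: instead of ruling out an interior point landing in a causal future/past, I want to show that a causal curve whose endpoints lie in $\hat I^+(P)$ cannot leave $\hat I^+(P)$. So first I would fix $P\in\hat M$, assume $\hat I^+(P)\neq\emptyset$, and take a future causal curve $\gamma:[a,b]\to\hat M$ with $\gamma(a),\gamma(b)\in\hat I^+(P)$. The goal is to show $\gamma(c)\in\hat I^+(P)$ for every $c\in(a,b)$.

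The key step is to exploit the definition of $\hat\ll$ together with the push-up property for the relations $\hat\ll,\hat\leq$ established right after the definition of these relations. Since $\gamma$ is future causal, $\gamma(a)\hat\leq\gamma(c)\hat\leq\gamma(b)$ for any $c\in(a,b)$. From $P\hat\ll\gamma(a)$ and $\gamma(a)\hat\leq\gamma(c)$, push-up (i.e. $P\hat\ll\gamma(a)\hat\leq\gamma(c)\Rightarrow P\hat\ll\gamma(c)$) immediately gives $\gamma(c)\in\hat I^+(P)$. That is essentially the whole argument; the past case $\hat I^-(P)$ is entirely time-dual, using $\gamma(c)\hat\leq\gamma(b)\hat\ll$ — wait, more precisely, if $\gamma(a),\gamma(b)\in\hat I^-(P)$ then $\gamma(b)\hat\ll P$, and from $\gamma(c)\hat\leq\gamma(b)\hat\ll P$ push-up yields $\gamma(c)\hat\ll P$, i.e. $\gamma(c)\in\hat I^-(P)$.

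The only genuine subtlety — the place I would be most careful — is the unwinding of the push-up property directly from the set-theoretic definitions, since at this point in the paper the push-up statement was asserted for abstract chronological/causal relations but one should double-check it holds for $\hat\ll,\hat\leq$ as defined. Concretely: if $P\hat\ll Q$, there is $q\in Q\setminus P$ with $P\subset I^-(q)$; if moreover $Q\hat\leq R$, i.e. $Q\subset R$, then $q\in R$, and one needs $q\notin P$ together with $P\subset I^-(q)$ — both of which are already in hand — so $P\hat\ll R$. The dual direction $P\hat\leq Q\hat\ll R$ is checked the same way. I would state this verification in one line rather than invoking the generic push-up, to keep the proof self-contained. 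I expect no real obstacle here; the lemma is a formal consequence of the definitions, and the proof should be only a few lines, parallel in structure to Example \ref{lemma:j+compcausallyconvex}.
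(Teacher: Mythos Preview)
Your proposal is correct and follows essentially the same approach as the paper: both reduce the claim to the push-up property $P\hat\ll x\hat\leq y\Rightarrow P\hat\ll y$ (and its dual). Your extra line verifying push-up directly from the set-theoretic definitions of $\hat\ll,\hat\leq$ is a nice self-contained touch, but the paper simply invokes push-up in one sentence.
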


\begin{proof}

 This is a direct consequence of the push-up property: $P \ll x \leq y$ implies $P \ll y$.
\end{proof}

\begin{lemma}\label{lemma:lcc}
Let $(M,g)$ be a globally hyperbolic space-time. Then $(\hat{M}, d_c,\hat{\ll},\hat{\leq}, \htau)$ is locally causally closed.
\end{lemma}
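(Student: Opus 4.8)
The plan is to establish the stronger statement that the causal relation $\hat{\leq}$ is closed in $\hat{M}\times\hat{M}$, since local causal closedness in the sense of \cite{KSlls} then follows at once. Recall that a Lorentzian pre-length space is locally causally closed if every point admits a neighborhood $U$ such that $\hat{\leq}\cap(\overline{U}\times\overline{U})$ is closed; if one adopts the version requiring $\overline{U}$ compact, it suffices to take $U$ to be a small $d_c$-ball around the given point, since $(\hat{M},d_c)$ is a proper metric space by \cite[Thrm.\ 10]{Olaf}. In either case the task reduces to showing: if $\{P_n\},\{Q_n\}\subset\hat{M}$ satisfy $P_n\hat{\leq}Q_n$ for all $n$ and $P_n\to P$, $Q_n\to Q$ in $d_c$, then $P\hat{\leq}Q$.

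The only external ingredient is the description of convergence in $(\hat{M},d_c)$: since $d_c$ metrizes the closed limit topology $\hat{\mathcal{T}}_{c}$, which is precisely the topology associated with the open Hausdorff limit operator $\hat{L}_{H}$, one has $P_n\to P$ in $d_c$ if and only if $P=LI(\{P_n\})=LS(\{P_n\})$ (see \cite{clt}; cf.\ also \cite{Olaf}). Now assume $P_n\hat{\leq}Q_n$, i.e.\ $P_n\subseteq Q_n$, for all $n$, and $P_n\to P$, $Q_n\to Q$. Monotonicity of $LI$ under inclusion gives $\bigcap_{m\geq n}P_m\subseteq\bigcap_{m\geq n}Q_m$ for every $n$, hence $LI(\{P_n\})\subseteq LI(\{Q_n\})$, so that $P=LI(\{P_n\})\subseteq LI(\{Q_n\})\subseteq LS(\{Q_n\})=Q$. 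Therefore $P\subseteq Q$, that is $P\hat{\leq}Q$, which proves that $\hat{\leq}$ is closed and hence that $(\hat{M},d_c,\hat{\ll},\hat{\leq},\htau)$ is locally causally closed.

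I expect the only real obstacle to be the convergence characterization quoted above; once it is granted, the order-theoretic part is just the one-line monotonicity of the inferior set limit, and nothing further is needed. A more self-contained alternative would, for a given $P\in\hat{M}$, build an explicit causally convex neighborhood from sets of the form $\hat{I}^{+}(\cdot)$ and $(\hat{J}^{\pm}(\cdot))^{c}$ --- open by Theorem \ref{teo:clt}(ii) and causally convex by Example \ref{lemma:j+compcausallyconvex} and Lemma \ref{lemma:i+causallyconvex} --- and then run the limiting argument inside it; this is possible but longer, so the inclusion--monotonicity route above is preferable.
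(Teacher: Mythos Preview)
Your proof is correct and follows essentially the same approach as the paper: both identify $d_c$-convergence with the Hausdorff limit $P=LI(\{P_n\})=LS(\{P_n\})$ and then use monotonicity of $LI$ under the inclusions $P_n\subseteq Q_n$ to conclude $P\subseteq Q$. The only difference is packaging---you explicitly state the global closedness of $\hat{\leq}$, whereas the paper phrases the same computation inside an arbitrary neighborhood $U$ (which plays no role in the argument).
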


\begin{proof}
Let $R \in \hat{M}$ and $U$ be a neighbourhood of $R$. Consider two sequences $\{P_n\},\{Q_n\} \subset U$ converging to $P$ and $Q$ in $\overline{U}$ with the CLT topology, respectively, with $P_n \hat{\leq} Q_n$ for every $n$ and. By definition,
$$P = \limsup (\{P_n\}) = \liminf (\{P_n\}), \qquad Q = \limsup (\{Q_n\}) = \liminf (\{Q_n\}),$$
then, since $ P_n \subset Q_n$ for every $n$ we have
$$ P = \liminf (\{P_n\}) \subset \liminf (\{Q_n\}) = Q,$$
which means $P \hat{\leq} Q$.
\end{proof}

\begin{corollary}\label{coro:causally_simple}
The causal sets $\hat{J}^{\pm} (P)$ are closed for every $P$ in $\hat{M}$. In particular $\hat{M}$ is causally simple. 
\end{corollary}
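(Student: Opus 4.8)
The plan is to obtain the corollary directly from local causal closedness. Since $(\hat{M},d_c)$ is metrizable by Theorem~\ref{teo:clt}(iv), it is enough to verify \emph{sequential} closedness of $\hat{J}^{+}(P)$ and $\hat{J}^{-}(P)$; the two cases are time-dual, so I will only treat $\hat{J}^{+}(P)$, assuming it is non-empty (otherwise there is nothing to prove).

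First I would pick a sequence $\{Q_n\}\subset \hat{J}^{+}(P)$ with $Q_n\to Q$ in $\hat{\mathcal{T}}_c$, so that $P\,\hat{\leq}\,Q_n$, i.e. $P\subset Q_n$, for every $n$. The key step is to feed this into Lemma~\ref{lemma:lcc} (applied with $U=\hat{M}$) together with the constant sequence $P_n\equiv P$: since $P_n\to P$, $Q_n\to Q$ and $P_n\,\hat{\leq}\,Q_n$ for all $n$, the argument of that lemma --- namely $P=\liminf\{P_n\}\subset\liminf\{Q_n\}=Q$ --- gives $P\,\hat{\leq}\,Q$, that is, $Q\in\hat{J}^{+}(P)$. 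Hence $\hat{J}^{+}(P)$ is closed, and by the time-dual argument so is $\hat{J}^{-}(P)$.

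For the ``in particular'' clause I would recall that a Lorentzian pre-length space is causally simple exactly when it is causal (i.e. $\hat{\leq}$ is a partial order, not merely a preorder) and all the causal sets $\hat{J}^{\pm}(P)$ are closed. The first condition holds here for a trivial reason: $\hat{\leq}$ is set inclusion $\subset$ on $\hat{M}$, hence antisymmetric. The second is precisely what was just established. Therefore $(\hat{M},d_c,\hat{\ll},\hat{\leq},\htau)$ is causally simple.

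The one point I would be careful about is the identification of $d_c$-convergence with the closed (Hausdorff) set-limit convergence $Q=LI(\{Q_n\})=LS(\{Q_n\})$ that legitimizes the inclusion $\liminf\{P_n\}\subset\liminf\{Q_n\}$; this is exactly the fact underlying the proof of Lemma~\ref{lemma:lcc} (and established in \cite{clt}), so once that lemma is available the corollary is essentially a one-line consequence, obtained by specializing one of the two sequences to a constant.
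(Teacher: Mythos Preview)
Your proof is correct and follows essentially the same approach as the paper: take a sequence in $\hat{J}^{+}(P)$ converging to some $Q$ and apply Lemma~\ref{lemma:lcc} (with the constant sequence $P_n\equiv P$) to conclude $P\,\hat{\leq}\,Q$. The paper's version is slightly terser and does not spell out the antisymmetry of $\hat{\leq}$ for the ``causally simple'' clause, but the argument is the same.
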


\begin{proof}
Since $\hat{J}^{+} (P) \subset \overline{\hat{J}^{+} (P)}$ we only need to prove that $\overline{\hat{J}^{+} (P)} \subset \hat{J}^{+} (P)$.

Let $Q \in \overline{\hat{J}^{+} (P)}$. Then there exists a sequence $\{Q_k\} \subset \hat{J}^{+} (P)$ that converges to $Q$. That is, $P \hat{\leq} Q_k$ for every $k$. Since $\hat{M}$ is causally closed and $\{Q_k\}$ converges tu $Q$, we get $P \hat{\leq} Q$. Thus $\hat{J}^{+} (P) = \overline{\hat{J}^{+} (P)}$. The past case is analogous.
\end{proof}

\subsection{Brief review of GRW spacetimes} 

A {\em generalized Robertson-Walker spacetime} (abbrev. GRW spacetime) is a Lorentzian manifold $M=(a,b) \times S$  furnished with a metric of the form
\[
g=-dt^{2} + \alpha(t) h,
\]
where $(a,b)\subset\mathbb{R}$ is a (possibly unbounded) interval, $(S,h)$ is a Riemannian manifold and $\alpha$ is a smooth positive function over $(a,b)$. 

Generalized Robertson-Walker spacetimes include open portions of all Lorentzian spaceforms as well as the important family of FLRW spacetimes. Thus they are of uttermost importance both from the mathematical and the physical point of view.

The chronological relation on GRW spacetimes  can be characterized in the following way (see \cite[Section 2.2]{joniluis}): given $(t_{0},x_{0}), (t_{1},x_{1})\in (a,b) \times S$, 
			\[
				(t_{0},x_{0}) \ll (t_{1},x_{1}) \Leftrightarrow d(x_{0},x_{1}) < \int_{t_{0}}^{t_{1}} \frac{ds}{\sqrt{\alpha(s)}.}
			\]
Here $d$ denotes the distance induced in $S$ by the metric tensor $h$. Moreover, the future causal completion can be characterized depending on the value of $\int_{c}^{b} \frac{ds}{\sqrt{\alpha(s)}}$ for some $c \in (a,b)$. Indeed, in \cite{valana} a thorough description of $\hat{M}$ is discussed. Here we summarize the most relevant results.
\begin{theorem}\label{teo:estM}
Let $M=(a,b) \times S$ with $g=-dt^2+\alpha(t) h$ be a warped spacetime and $(S,d)$ a locally compact metric space. Then, 
\begin{enumerate} \setlength\itemsep{1em}				
\item If $\int_{c}^{b} \frac{ds}{\sqrt{\alpha(s)}}= \infty$ ,then, the future causal boundary $\hat{\partial} M$ is an infinite null cone with base $\partial_{\mathcal{B}}S \setminus \partial_{C} S$ with apex in $i^{+}$ and timelike lines over each point in $\partial_{C} S$ and final point in $i^{+}$. Moreover, $\hat{M}$ is homeomorphic to $M \cup ((a,b) \times \partial_{C} S) \cup ((a,b) \times \partial_{\mathcal{B}} S) \cup i^{+}$.
\item If $\int_{c}^{b} \frac{ds}{\sqrt{\alpha(s)}} < \infty$, then, $\hat{\partial} M$ is a copy of the Cauchy completion $S^{C}$ of $(S,h)$ and timelike lines over each point in $\partial_{C} S$ that finish in the same point at the copy at infinity of $S^{C}$. Moreover, $\hat{M}$ is homeomorphic to $M \cup ((a,b) \times \partial_{C} S) \cup (\{b\} \times S^{C})$. 
\end{enumerate}
\end{theorem}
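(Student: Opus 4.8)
The plan is to strip away the warping with a change of time coordinate, reducing everything to a static product; then to classify the indecomposable past sets directly from the Lipschitz description of past sets, reading off each stratum of the boundary; and finally to match the closed‑limit topology of $\hat M$ with the announced model, which is where the genuine work lies. \textbf{Step 1 (reduction to a static product).} Fix $c\in(a,b)$ and let $\phi(t)=\int_{c}^{t} ds/\sqrt{\alpha(s)}$; this is a smooth increasing diffeomorphism of $(a,b)$ onto an interval $I=(\underline\beta,\beta)$ with $\beta=\int_{c}^{b} ds/\sqrt{\alpha(s)}\in\mathbb{R}\cup\{+\infty\}$, and $\Phi(t,x)=(\phi(t),x)$ is a conformal diffeomorphism of $M$ onto $M_{0}:=I\times S$ with $g_{0}=-dT^{2}+h$. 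By the stated characterization of $\ll$, $\Phi$ preserves the chronological relation; since both the set of IPs and the topology $\hat{\mathcal{T}}_{c}$ of Theorem~\ref{teo:clt} are determined by $\ll$ alone, $\Phi$ induces a homeomorphism $\hat M\cong\hat{M_{0}}$ compatible with $\hat{\ll}$ and $\hat{\leq}$. Hence we may assume $\alpha\equiv 1$, $M=I\times S$, $g=-dT^{2}+h$ and $\beta=\sup I$.

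\textbf{Step 2 (past sets as Lipschitz graphs; the chain dictionary).} To a nonempty past set $P\subsetneq M_{0}$ attach $f_{P}\colon S\to I\cup\{\beta\}$, $f_{P}(x)=\sup\{T:(T,x)\in P\}$. Since $I^{-}(T_{1},x_{1})=\{(T_{0},x_{0}):T_{0}<T_{1}-d(x_{0},x_{1})\}$, the condition $P=I^{-}(P)$ is equivalent to $(T,x)\in P\Leftrightarrow T<f_{P}(x)$ together with $|f_{P}(x)-f_{P}(y)|\le d(x,y)$ for all $x,y$; if $\beta=\infty$ and $S$ is connected, either $f_{P}\equiv\infty$ (so $P=M_{0}$) or $f_{P}$ is real‑valued. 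By the GKP dichotomy recalled in Section~\ref{sec:pre}, the indecomposable $P$ are exactly the pasts of inextensible future chains $\{(\tau_{n},x_{n})\}$, for which $f_{P}(x)=\sup_{n}(\tau_{n}-d(x,x_{n}))$, the PIPs being those for which a single term realizes the supremum. The decisive elementary remark is that future‑directedness, $d(x_{n},x_{n+1})<\tau_{n+1}-\tau_{n}$, forces $\sum_{n}d(x_{n},x_{n+1})\le\tau_{\infty}-\tau_{1}$ where $\tau_{\infty}:=\sup_{n}\tau_{n}$; therefore, whenever $\tau_{\infty}<\infty$ the spatial sequence $\{x_{n}\}$ is $d$‑Cauchy, hence converges to a point $\bar x$ of the Cauchy completion $S^{C}$, and a short estimate using $d(x_{n},\bar x)\le\tau_{\infty}-\tau_{n}$ yields the clean form $f_{P}=\tau_{\infty}-d(\,\cdot\,,\bar x)$.

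\textbf{Step 3 (assembling the boundary).} One now separates cases by comparing $\tau_{\infty}$ with $\beta$. If $\tau_{\infty}<\infty$, then $P$ is the TIP with $f_{P}=\tau_{\infty}-d(\,\cdot\,,\bar x)$, which fails to be a PIP precisely when $(\tau_{\infty},\bar x)\notin M_{0}$, i.e.\ when either $\bar x\in\partial_{C}S$ --- yielding, for each such $\bar x$, a whole timelike line of TIPs parametrized by $\tau_{\infty}\in I$ --- or $\tau_{\infty}=\beta<\infty$ --- yielding, as $\bar x$ ranges over $S^{C}$, the copy of $S^{C}$ ``at level $\beta$''. If $\tau_{\infty}=\infty$, which can occur only when $\beta=\infty$, then $\{x_{n}\}$ need not be Cauchy: if $\{x_{n}\}$ stays within a bounded set (or converges in $S^{C}$) then $f_{P}\equiv\infty$ and $P=M_{0}$, the apex $i^{+}$; otherwise $\{x_{n}\}$ leaves every compact set and $f_{P}=\sup_{n}(\tau_{n}-d(\,\cdot\,,x_{n}))$ is a finite generalized Busemann function, and these are precisely the points of $\partial_{\mathcal{B}}S\setminus\partial_{C}S$, organized into the null cone with apex $i^{+}$. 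Reading the relations off the graph description --- namely $A_{f}\hat{\leq}A_{g}\Leftrightarrow f\le g$, and $A_{f}\hat{\ll}A_{g}\Leftrightarrow\sup_{x}(f(x)+d(x,x_{1}))<g(x_{1})$ for some $x_{1}$ --- confirms the timelike, respectively null, character of each stratum and shows that all boundary lines ``end'' at $i^{+}$; in the case $\beta<\infty$ the apex is absent and the null cone collapses to the copy of $S^{C}$. This reproduces both items of the statement.

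\textbf{Step 4 (the topology) and the main obstacle.} It remains to verify that the natural bijection between $\hat M$ and the explicit model is a homeomorphism for $\hat{\mathcal{T}}_{c}$. Writing $\mu=\sigma\otimes\nu$ with $\sigma$ a finite measure on $I$ and $\nu$ a finite Borel measure on $S$, one gets $d_{c}(A_{f},A_{g})=\int_{S}\sigma([\,\min\{f(x),g(x)\},\max\{f(x),g(x)\}\,])\,d\nu(x)$, an $L^{1}$‑type distance between graph functions; on the image of $M$ this returns the manifold topology (Theorem~\ref{teo:clt}(i)), on each timelike line it returns the topology of $I$, and convergence of a sequence of IPs to a boundary IP translates into convergence of the associated limit points in $S^{C}$, respectively in the Busemann completion. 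The main obstacle is exactly this last identification: reconciling the extrinsic $L^{1}$‑metric coming from $\mu$ with the \emph{intrinsic} topologies of the Cauchy and Busemann completions of $(S,d)$, and checking that no boundary IPs beyond those already listed coincide --- here Lemma~\ref{lemadesigualdad} separates boundary IPs from PIPs, and the local compactness of $(S,d)$ is the hypothesis that makes escaping spatial sequences converge in the appropriate completion and keeps the limit operators well behaved. Everything in Steps~1--3 is then bookkeeping once the Lipschitz‑graph/chain dictionary of Step~2 is in place.
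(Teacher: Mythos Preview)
The paper does not prove this theorem: it is quoted verbatim from the literature (the reference \cite{valana}, with the finer topological analysis deferred to \cite{GromovCauchy}), so there is no ``paper's own proof'' to compare against. Your outline is in fact close in spirit to what those references do: the conformal reduction to a static product, the description of past sets as subgraphs of $1$-Lipschitz functions, and the chain dictionary that forces the spatial projection of a future chain to be $d$-Cauchy whenever $\tau_\infty<\infty$ are all standard moves going back to Harris and Flores--Harris, and the case split in Step~3 is the right bookkeeping.

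The genuine gap is Step~4. You correctly identify it as ``the main obstacle'' but then do not close it: you describe what has to be checked (that the $L^1$-type metric $d_c$ induces, stratum by stratum, the intrinsic topologies of $S^{C}$ and of the Busemann completion) without actually verifying any of these identifications. This is not a formality. In the $\beta=\infty$ case the Busemann completion carries its own, somewhat delicate, topology (the ``chr'' topology of \cite{GromovCauchy}), and showing that convergence of the Lipschitz graph functions in your $L^1$ sense matches convergence of Busemann classes requires real work --- in particular it uses local compactness of $S$ in an essential way and is exactly the content of the comparison theorems in \cite{GromovCauchy}. Likewise, your assertion that ``these are precisely the points of $\partial_{\mathcal B}S\setminus\partial_C S$'' hides the nontrivial fact that every finite $1$-Lipschitz function arising as $\sup_n(\tau_n-d(\cdot,x_n))$ from an unbounded chain is a Busemann function associated to some curve, and conversely; this surjectivity onto the Busemann boundary is again a theorem, not an observation. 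So Steps~1--3 are a sound and essentially standard sketch, but as written the argument stops just before the part that carries the weight of the ``Moreover, $\hat M$ is homeomorphic to \ldots'' clauses.
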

Here $\partial_\mathcal{B} S$ and $\partial_{C} S$ denote the Busemann and (metric) Cauchy boundaries of $S$. Let us recall that the Cauchy completion $C(X)$ of a metric space $(X,d)$ consists of all the Cauchy sequences $\{x_n\} \subset X$. The Busemann completion of a Riemannian manifold $(S,h)$ is defined as follows: consider a piecewise smooth curve $c: [a, \Omega] \to S$, $-\infty < a < \Omega \leq \infty$, with $|\dot{c}|^2 = h (\dot{c}, \dot{c}) < 1$ and consider the associated function
The {\em Busemann function} of a curve $c \in C(S)$ is
$$b_c(x) = \lim_{s \nearrow \Omega} (s - d^h (x, c(s)) \in \mathbb{R} \cup \{+\infty\}, \quad x\in S.$$
Denote by $\mathcal{B} (S)$ the set of all finite Busemann functions, which is invariant under the additive action: if $b_c \in \mathcal{B}(S)$ then $b_c + k \in \mathcal{B}(S)$ for all $k \in (a-\Omega, \infty)$. Define the {\em Busemann boundary} as the quotient
$$\partial_B (S) := \mathcal{B} (S) / (a,\infty).$$
Note that $\partial_B S$ includes two types of elements, those associated to inextensible curves $c$ with $\Omega = \infty$, which can be interpreted as infinity directions in $(S,h)$; and those associated to inextensible curves $c$ with $\Omega < \infty$, which define points in the Cauchy boundary $\partial_C S$.\footnote{For a thorough description and analysis of these completions and their relation with the causal boundary of GRW spacetimes see \cite{GromovCauchy}.}
In particular, if $(S,h)$ is complete, then $\partial_{C} S=\emptyset$. This is the case when $M = I \times_{\alpha} S$ is globally hyperbolic \cite{beem}.

\section{Causality of the causal completion}\label{sec:causality}

In this section we focus our attention on the causality of the causal completion $\hat{M}$ of a globally hyperbolic GRW spacetime $(a,b)\times_\alpha S$. Our main result establishes the global hyperbolicity of $\hat{M}$. We divide our analysis in two cases, depending on the finiteness of $\int_{c}^{\infty} \frac{ds}{\sqrt{\alpha (s)}}$.

\subsection{Case A. $\int_{c}^{\infty} \frac{ds}{\sqrt{\alpha (s)}} < \infty$.}

In virtue of Theorem \ref{teo:estM}, we have that  $\hat{M} \equiv M \cup (\{b\} \times S)$ and hence the causal boundary consists of a copy of $S$ at $b$. Note that any TIP $P \in \hat{\partial} M$ can be identified with a set of the form $P = I^{-}(b, x)$, $x \in S$, where 
\[I^{-}(b,x):=\{(t_{0},x_{0})  \in M \mid \int_{t_{0}}^{b} \frac{ds}{\sqrt{\alpha(s)}}< \int_{c}^{b} \frac{ds}{\sqrt{\alpha(s)}} - d(x_{0},x) \}.
\]

First we show that almost strong causality holds on the whole $\hat{M}$.

\begin{proposition}\label{prop:weakstrongcausality}
Let $M = (a,b) \times_\alpha S$ be a globally hyperbolic GRW spacetime with $\int_{c}^{\infty} \frac{ds}{\sqrt{\alpha (s)}} < \infty$. Then $(\hat{M}, d_c, \hat{\ll},\hat{\leq},\htau)$ is almost strongly causal in any $P \in \hat{M}$.
\end{proposition}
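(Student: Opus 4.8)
The plan is to split according to whether $P$ is a PIP or a TIP, i.e.\ $P\in i(M)$ or $P\in\hat{\partial}M$; the first case is short and the second carries all the difficulty.

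\textbf{Case $P=I^{-}(p)\in i(M)$.} By Theorem~\ref{teo:clt}(i) the inclusion $i\colon(M,g)\to(\hat{M},\hat{\mathcal{T}}_{c})$ is an open continuous map onto an open subset on which the induced topology is the manifold topology, and since $M$ is globally hyperbolic it is strongly causal. So, given a neighbourhood $U\ni P$, I would pull $U\cap i(M)$ back to a neighbourhood of $p$ in $M$, pick inside it a spacetime--causally convex open set $V'\ni p$, and put $V:=i(V')$, which is open in $\hat{M}$ (openness of $i$), contains $P$ and lies in $U$. The only point to verify is that $V$ is causally convex \emph{in $\hat{M}$}: if $\gamma\colon[a,b]\to\hat{M}$ is a future causal curve with $\gamma(a),\gamma(b)\in i(M)$ but $\gamma(t_{0})\in\hat{\partial}M$ for some $t_{0}\in(a,b)$, then $\gamma(t_{0})\hat{\leq}\gamma(b)$ forces $\gamma(b)\in\hat{\partial}M$ by Lemma~\ref{lemadesigualdad}, a contradiction; hence $\gamma$ stays in $i(M)$, where $\hat{\leq}$ restricts to the spacetime causal relation (as $M$ is past-distinguishing), so $\gamma$ is a future causal curve of $M$ with endpoints in $V'$ and therefore stays in $V'$.

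\textbf{Case $P=I^{-}(b,x)\in\hat{\partial}M$.} Recall $\hat{M}\equiv M\cup(\{b\}\times S)$, so every IP is $I^{-}(t,y)$ with $t<b$ or $I^{-}(b,y)$. I would take as candidate neighbourhood basis at $P$ the family
\[
V_{t_{0}}:=\hat{I}^{+}\!\bigl(I^{-}(t_{0},x)\bigr),\qquad t_{0}\in(a,b),
\]
each of which is open by Theorem~\ref{teo:clt}(ii) and causally convex by Lemma~\ref{lemma:i+causallyconvex}. That $P\in V_{t_{0}}$ for all $t_{0}$: for any $t_{1}\in(t_{0},b)$ the point $q=(t_{1},x)$ lies in $I^{-}(b,x)\setminus I^{-}(t_{0},x)$ and $I^{-}(t_{0},x)\subset I^{-}(t_{1},x)=I^{-}(q)$, so $I^{-}(t_{0},x)\hat{\ll}I^{-}(b,x)$. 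To see that the $V_{t_{0}}$ are ``small'', write $\delta_{t_{0}}:=\int_{t_{0}}^{b}\frac{ds}{\sqrt{\alpha(s)}}$ (finite, with $\delta_{t_{0}}\to0$ as $t_{0}\uparrow b$) and use the GRW chronology formula $(t,y)\ll(t',y')\Leftrightarrow d(y,y')<\int_{t}^{t'}\frac{ds}{\sqrt{\alpha(s)}}$: if $Q\in V_{t_{0}}$, write $Q=I^{-}(t,y)$ with $t\leq b$ (possibly $t=b$); there is $q=(t_{1},x_{1})\in Q\setminus I^{-}(t_{0},x)$ with $I^{-}(t_{0},x)\subset I^{-}(q)$; testing this inclusion on the points $(s,x)$ with $s<t_{0}$ gives $d(x,x_{1})\leq\int_{t_{0}}^{t_{1}}$, whence $t_{0}\leq t_{1}<t\leq b$, while $q\in Q$ gives $d(x_{1},y)<\int_{t_{1}}^{t}$, so $d(x,y)<\int_{t_{0}}^{t}\leq\delta_{t_{0}}$ and $t>t_{0}$. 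Thus, under the homeomorphism of Theorem~\ref{teo:estM}(2), $V_{t_{0}}$ is contained in the image of $\{(\tau,y):t_{0}<\tau\leq b,\ d(x,y)<\delta_{t_{0}}\}$; as $\delta_{t_{0}}\to0$ these sets shrink to $\{(b,x)\}$ and form a neighbourhood basis at $P$, so every neighbourhood of $P$ contains some $V_{t_{0}}$, which is the required causally convex neighbourhood.

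\textbf{Main obstacle.} The technical heart is the second case: one must describe $V_{t_{0}}$ precisely enough (through the integral characterization of $\ll$ and the triangle inequality in $(S,d)$) to see that it is trapped inside a genuinely small neighbourhood of $(b,x)$, and then convert set-theoretic smallness into $d_{c}$-smallness. I would avoid estimating $\mu(Q\triangle P)$ directly --- which would require controlling the $\mu$-measure of the light cone of $(b,x)$ --- by instead invoking the explicit homeomorphism type of $\hat{M}$ in Theorem~\ref{teo:estM}(2), which turns the sandwich above into an honest shrinking neighbourhood basis of $(b,x)$. The remaining ingredients (openness, causal convexity, $P\in V_{t_{0}}$, and the curve-trapping step in the first case) are routine once the results of Section~\ref{sec:CLpLS} and the GRW chronology relation are in hand.
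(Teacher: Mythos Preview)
Your proof is correct, and in the TIP case it follows a genuinely different route from the paper's. Both arguments take the same candidate neighbourhood basis $\hat{I}^{+}(I^{-}(p_n))$ with $p_n\to(b,x)$ along the vertical line; the divergence is in how one shows these sets eventually fit inside a prescribed $U$. The paper argues by contradiction through the limit operator $\hat{L}_{chr}$: if every $\hat{I}^{+}(I^{-}(p_n))$ escaped $U$ via some $R_n$, then $P\subset\liminf R_n$, and since in the finite–integral case distinct TIPs are $\hat{\leq}$–incomparable, $P$ is automatically a maximal IP in $\limsup R_n$; hence $R_n\to P$ in $\hat{\mathcal T}_{chr}=\hat{\mathcal T}_c$, contradicting $R_n\notin U$. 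You instead compute directly with the GRW chronology formula and the triangle inequality in $S$ to trap $V_{t_0}$ inside $(t_0,b]\times B_{\delta_{t_0}}(x)$, and then invoke the explicit homeomorphism of Theorem~\ref{teo:estM}(2) to recognise these product boxes as a neighbourhood basis at $(b,x)$.

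What each approach buys: the paper's argument is structural and reuses the same template in the infinite–integral case (Proposition~\ref{prop:almost_strong_causality_infinito}), where the maximality step is replaced by Lemma~\ref{lemma:busemanncoinciden} and the compactness of causal diamonds; it never needs to unpack the warped metric. Your argument is more concrete and self-contained, giving an explicit picture of $V_{t_0}$, but it leans on the specific product description of $\hat{M}$ in the finite–integral case and would not transfer to Case~B without substantial modification. One small point worth making explicit in your write-up: the homeomorphism in Theorem~\ref{teo:estM}(2) targets $M\cup(\{b\}\times S)$ with the natural subspace topology of $(a,b]\times S$, which is what makes the product boxes a genuine neighbourhood basis; and in your PIP case, the passage from ``$\hat{M}$–causal curve with endpoints in $V'$'' to ``contained in $V'$'' is cleanest if you note that causal convexity of $V'$ in $M$ gives $J(p,q)\subset V'$ for $p,q\in V'$, which already traps the whole image of $\gamma$.
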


\begin{proof}
If $P \in i (M)$ then strong causality of $M$ implies immediately the result. Let $P \in \hat{\partial} M$ and $U$ be an open neighbourhood of $P$. Let $\{p_n\}$ be a future chain generating $P$ then the sequence of IPs given by $\{I^{-} (p_n)\}$ converges to $P$ with respect to CLT, which coincides with the chronological topology in this case. We know that $\hat{I}^{+} (I^{-} (p_n))$ is a causally convex open neighbourhood of $P$ for every $n$ since $I^{-} (p_n) \hat{\ll} P$. We want to show that for some $n$ large enough, $\hat{I}^{+} (I^{-} (p_n)) \subset U$. Proceeding by contradiction we will assume that $\hat{I}^{+} (I^{-} (p_n)) \not\subset U$.

For every $n$, there exists $R_n \in \hat{I}^{+} (I^{-} (p_n))$ with $R_n \notin U$, and by definition there exists $r_n \in R_n \setminus I^{-} (p_n)$ such that $I^{-} (p_n) \subset I^{-} (r_n)$. Given that $\{I^{-} (p_n)\}$ converges to $P$ we have that
$$P = \liminf (I^{-} (p_n)) = \limsup (I^{-} (p_n)).$$
Given $p \in P$, for $n$ large enough, $p_n$ satisfies $p \ll p_n \ll r_n$ and therefore $p \in \liminf (R_n)$ and hence $p \in \limsup (R_n)$. Thus, the structure of the causal boundary implies that $P$ is a maximal IP in $\limsup\{R_n\}$. Therefore, $P \in \hat{L}_{chr}\{R_n\}$ and thus $\{R_n\}$ converges to $P$ in the metric $d_c$ which is a clear contradiction to $R_n \not\in U$. 

Therefore, for the subset $U$ there exists $N \in \mathbb{N}$ such that for $n \geq N$ we have $\hat{I}^{+}(I^{-}(p) \subset U$ as we wanted.
\end{proof}

We now move on into proving that $\hat{M}$ is a non-total imprisoning Lorentzian pre-length space.

\begin{proposition}
Let $M = (a,b) \times_{\alpha} S$ be a globally hyperbolic  warped spacetime with $\int_{c}^{b} \frac{ds}{\sqrt{\alpha (s)}} < \infty$. Then $(\hat{M}, d_c, \hat{\ll},\hat{\leq},\htau)$ is a non-totally imprisoning Lorentzian pre-length space.
\end{proposition}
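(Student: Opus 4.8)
The plan is to show that any causal curve in $\hat{M}$ that is contained in a compact set $K$ has bounded $d_c$-length. Since we are in Case A, $\hat{M} \equiv M \cup (\{b\}\times S)$ with the boundary being a spacelike copy of $S$ at the final instant $b$. The key geometric observation is that the $t$-coordinate (appropriately extended so that boundary points $I^-(b,x)$ are assigned the value $b$) is a time function on $\hat{M}$: if $P \hat{\leq} Q$ and $P \neq Q$, then $t(P) \leq t(Q)$, and this inequality is strict along timelike steps. Combined with Remark \ref{rem:instrinsic}, which says that any future causal curve joining $A$ to $C$ realizes the distance $d_c(A,C) = \mu(A\triangle C)$, this will let us convert a length bound into a bound on the measure of a symmetric difference of IPs.

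First I would make the extended time function $\hat{t}:\hat{M}\to (a,b]$ precise: for $P = I^-(p)$ set $\hat{t}(P) = t(p)$, and for $P = I^-(b,x) \in \hat{\partial}M$ set $\hat{t}(P) = b$. Using the explicit description of $I^-(b,x)$ and the chronological characterization on GRW spacetimes, I would check that $\hat{t}$ is monotone along $\hat{\leq}$: if $P \subset Q$ then the past sets nest, forcing $\hat{t}(P)\leq \hat{t}(Q)$, because a past set containing points with $t$ arbitrarily close to its sup determines that sup, and $Q\supset P$ means $Q$ reaches at least as high. Then, given a causal curve $\gamma:[a,b]\to K$ contained in a compact $K$, I would argue that $\gamma$ eventually either stays in $i(M)$ (in which case non-total imprisonment of the globally hyperbolic spacetime $M$ — together with the fact that on compact subsets of $M$ the $d_c$-metric and the background auxiliary Riemannian metric are comparable — gives the bound) or reaches the boundary. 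Since the boundary $\{b\}\times S$ is totally achronal (any two of its points are causally unrelated by Lemma \ref{lemadesigualdad} and the structure in Case A — once you hit $\hat{t}=b$ you cannot move chronologically), a causal curve can meet $\hat{\partial}M$ in at most a single point, after which it must be constant. Hence $\gamma$ decomposes into a portion in $i(M)$ plus possibly one boundary endpoint.

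The main work is therefore the bound on the portion of $\gamma$ lying in $i(M) \cap K$. Here I would invoke Remark \ref{rem:instrinsic}: the $d_c$-length of $\gamma$ between parameters $s_1 < s_2$ equals $d_c(\gamma(s_1),\gamma(s_2)) = \mu(I^-(\gamma(s_1))\,\triangle\, I^-(\gamma(s_2)))$ whenever $\gamma$ is causal, and more generally the total $d_c$-length is the supremum of $\sum_i \mu(P_{i}\triangle P_{i+1})$ over partitions. Using monotonicity of the nested past sets along the causal curve, consecutive symmetric differences $I^-(\gamma(s_i))\triangle I^-(\gamma(s_{i+1}))$ are pairwise disjoint annular regions whose union is contained in $I^-(\gamma(s_2))\setminus I^-(\gamma(s_1)) \subset I^-(b, x)$ for the endpoint — a set of finite $\mu$-measure since $\mu$ is a finite Borel measure on $\hat{M}$. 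Therefore the length is bounded by $\mu(\hat{M}) < \infty$ uniformly, which is actually stronger than needed. The one subtlety I expect to fight with is the rigorous justification that the symmetric-difference regions are genuinely nested/disjoint along an arbitrary causal (not necessarily timelike) curve — this uses $\hat{\leq}\,=\,\subseteq$ so causal steps really do give $I^-(\gamma(s_i)) \subseteq I^-(\gamma(s_{i+1}))$ directly — and handling curves that oscillate toward the boundary without a well-defined endpoint, which I would rule out using that $\hat{M}$ with $d_c$ is a proper metric space (Olaf's result cited after Theorem \ref{teo:clt}) so a curve in compact $K$ has a convergent subsequence and, being causal hence a $d_c$-distance realizer on subarcs, actually has finite length equal to $d_c$ of its endpoints.
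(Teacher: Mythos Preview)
Your core observation is correct and is actually simpler than what the paper does, but your write-up buries it under unnecessary scaffolding. The paper's proof invokes general machinery: it uses Proposition~\ref{prop:weakstrongcausality} (almost strong causality), Lemma~\ref{lemma:lcc} (local causal closedness), and $d_c$-compatibility, and then appeals to the argument of \cite[Thm.~3.26(iii)]{KSlls}, which shows abstractly that these properties together imply non-total imprisonment. Your route is genuinely different: you exploit the concrete form $d_c(A,B)=\mu(A\triangle B)$ and Remark~\ref{rem:instrinsic}. Since $\hat{\leq}$ is literally set inclusion, a causal curve $\gamma$ has $\gamma(s_1)\subset\gamma(s_2)$ whenever $s_1<s_2$, so for any partition the sum $\sum_i d_c(\gamma(s_i),\gamma(s_{i+1}))=\sum_i\mu\bigl(\gamma(s_{i+1})\setminus\gamma(s_i)\bigr)$ telescopes to $\mu\bigl(\gamma(s_n)\setminus\gamma(s_0)\bigr)\leq\mu(M)<\infty$. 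This gives a \emph{uniform} bound on the $d_c$-length of every causal curve, independent of any compact set, of the GRW structure, and of the finite-integral hypothesis---a strictly stronger statement than what is claimed.

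Given this, the extended time function $\hat{t}$, the decomposition into interior and boundary pieces, the achronality of $\hat{\partial}M$, and the worry about oscillation toward the boundary are all superfluous: once you have the telescoping bound, none of that case analysis is needed. If you streamline the argument to the single paragraph above it is both shorter and more general than the paper's proof; the paper's approach, in exchange, stays within the abstract pre-length-space framework and does not rely on the explicit measure-theoretic description of $d_c$.
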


\begin{proof}
Considering that $\hat{M}$ is an almost strongly causal Lorentzian pre-length space and by Lemma \ref{lemma:lcc}, it is locally causally closed, the proof follows the same argument as the proof of \cite[Thm 3.26 (iii)]{KSlls} using that $\hat{M}$ is locally causally closed and satisfies Proposition \ref{prop:weakstrongcausality} in addition to being $d_c$-compatible.
\end{proof}

\begin{proposition}
Let $M = (a,b) \times_{\alpha} S$ be a globally hyperbolic  GRW spacetime with $\int_{c}^{b} \frac{ds}{\sqrt{\alpha (s)}} < \infty$. Then $(\hat{M}, d_c, \hat{\ll},\hat{\leq},\htau)$ is a globally hyperbolic Lorentzian pre-length space.
\end{proposition}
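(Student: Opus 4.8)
The strategy is to verify the two defining conditions of a globally hyperbolic Lorentzian pre-length space. Non-total imprisonment was established in the previous proposition, so the task reduces to showing that every causal diamond $\hat{J}^{+}(P)\cap\hat{J}^{-}(Q)$ with $P\hat{\leq}Q$ is compact. I will use the description furnished by Theorem \ref{teo:estM}: since $M$ is globally hyperbolic, $(S,h)$ is complete (so, by Hopf--Rinow, its closed bounded subsets are compact), $\hat{\partial}M$ is a copy of $S$, each boundary IP has the form $P_x=I^{-}(b,x)$ for a unique $x\in S$, and $\hat{M}\cong M\cup(\{b\}\times S)$ homeomorphically. Writing $\phi(t):=\int_{t}^{b}ds/\sqrt{\alpha(s)}$, one reads off directly from the chronology characterization (using $I^{-}(b,x)=\bigcup_n I^{-}(t_n,x)$ for any $t_n\nearrow b$) that $I^{-}(b,x)=\{(t_0,x_0)\in M\mid d(x_0,x)<\phi(t_0)\}$, and that $\phi(t)\to 0$ as $t\nearrow b$ because $\int_{c}^{b}ds/\sqrt{\alpha(s)}<\infty$.

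By Lemma \ref{lemadesigualdad} a pair $P\hat{\leq}Q$ can occur in only three ways. If $P,Q\in i(M)$, say $P=I^{-}(p)$ and $Q=I^{-}(q)$, then $\hat{J}^{-}(I^{-}(q))=i(J^{-}(q))$: no terminal IP is $\hat{\leq}I^{-}(q)$ by Lemma \ref{lemadesigualdad}, and for proper IPs $I^{-}(p')\subseteq I^{-}(q)\Leftrightarrow p'\leq q$ by causal simplicity of $M$ (since $p'\in\overline{I^{-}(p')}\subseteq\overline{I^{-}(q)}=J^{-}(q)$); likewise $\hat{J}^{+}(I^{-}(p))\cap i(M)=i(J^{+}(p))$, so the diamond equals $i\big(J^{+}(p)\cap J^{-}(q)\big)$, which is compact since $J^{+}(p)\cap J^{-}(q)$ is compact in $M$ and $i$ is continuous (Theorem \ref{teo:clt}). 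If $P,Q\in\hat{\partial}M$, say $P=P_x$ and $Q=P_y$, then the inclusion $I^{-}(b,x)\subseteq I^{-}(b,y)$ forces $x=y$: for every $t_0\in(a,b)$ the point $(t_0,x)$ lies in $I^{-}(b,x)$, hence in $I^{-}(b,y)$, so $d(x,y)<\phi(t_0)$, and $\phi(t_0)\to 0$; thus the diamond is $\hat{J}^{+}(P)\cap\hat{J}^{-}(P)=\{P\}$ by antisymmetry of $\hat{\leq}$.

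The remaining, and genuinely delicate, case is $P=I^{-}(p)\in i(M)$ with $Q=P_y\in\hat{\partial}M$. From the chronology formula one checks that for $p'=(t',x')$ one has $I^{-}(p')\subseteq I^{-}(b,y)$ if and only if $d(x',y)\leq\phi(t')$, and that $P_y$ is the only boundary IP that is $\hat{\leq}Q$; hence the diamond equals $i(A)\cup\{P_y\}$, where $A=J^{+}(p)\cap\{(t',x')\in M\mid d(x',y)\leq\phi(t')\}$ is closed in $M$ and, since $p\leq p'$ forces $t'\geq t_p$ and hence $d(x',y)\leq\phi(t')\leq\phi(t_p)$, is contained in $[t_p,b)\times\overline{B}\big(y,\phi(t_p)\big)$ with $\overline{B}\big(y,\phi(t_p)\big)$ compact by Hopf--Rinow. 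Compactness of $i(A)\cup\{P_y\}$ in $\hat{M}$ then follows by a sequential argument: any sequence in $A$ admits a subsequence with $t'_n\to t^*\in[t_p,b]$ and $x'_n\to x^*$; if $t^*<b$ the subsequence converges in $M$ to a point of $A$, whereas if $t^*=b$ then $d(x'_n,y)\leq\phi(t'_n)\to 0$ forces $x'_n\to y$, so $(t'_n,x'_n)\to(b,y)$ and, via the homeomorphism of Theorem \ref{teo:estM}, $i(p'_n)\to P_y$ in $\hat{M}$.

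The main obstacle, as indicated, is this mixed case: one must (i) pin down the causal relation between an interior PIP and a boundary point, (ii) use completeness of $S$ through Hopf--Rinow to bound the spatial extent of the truncated diamond, and (iii) invoke the explicit topology of $\hat{M}$ near $\hat{\partial}M$ from Theorem \ref{teo:estM} to see that adjoining the single point $P_y$ closes it up compactly. The other two cases reduce quickly to global hyperbolicity of $M$ and to the fact that finiteness of the conformal time collapses $\hat{\partial}M$ to a totally unordered, spacelike slice.
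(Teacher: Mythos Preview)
Your proof is correct and follows essentially the same approach as the paper: both arguments split into the three cases (PIP--PIP, TIP--TIP, PIP--TIP), dispatch the first two quickly, and handle the mixed case by a sequential compactness argument that uses completeness of $(S,h)$ to extract a spatially convergent subsequence and the finiteness of $\phi(t)=\int_t^b ds/\sqrt{\alpha(s)}$ to force convergence to $Q$ when $t_n\nearrow b$. Your version is in fact a bit more explicit in places---you identify the diamond as $i(A)\cup\{P_y\}$, give the precise characterization $I^-(p')\subseteq I^-(b,y)\Leftrightarrow d(x',y)\le\phi(t')$, and invoke the homeomorphism of Theorem~\ref{teo:estM} for the boundary limit---whereas the paper argues the convergence more directly from the chain description; but the underlying ideas are identical.
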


\begin{proof}
Consider $P,Q \in \hat{M}$ with $P \hat{\leq} Q$, that is, $P \subset Q$. We proceed to show that the causal diamond $\hat{J} (P,Q)$ is compact. If $P = I^{-}(p)$ and $Q = I^{-} (q)$ are both PIPs, there is nothing to prove as $M$ is globally hyperbolic and $I^{-}(p) \hat{\leq} I^{-} (q)$ if and only if $p \leq q$. The case where $P, Q \in \hat{\partial} M$ is not possible since there are no two causally related TIPs.

The only case left is $Q \in \hat{\partial} M$ and $P = I^{-} ((t,r))$ a PIP. Assume that $(t,r) \in Q$ and take a future chain $(t_n, r_{Q})$ generating $Q$, where $r_Q$ is the spatial projection of $Q$ on $S$. Then for some $n$ large enough we have $(t,r) \ll (t_n, r_Q)$, which by definition means
\[ d(s, r_Q) < \int_{t}^{t_n}\frac{dr}{\sqrt{\alpha(r)}} \leq \int_{t}^{b} \frac{dr}{\sqrt{\alpha (r)}} =: L_0. \]
Thus $s \in B_{L_0}^{d} (r_Q) \subset (S,h)$. 

Let $\{R_n \}$ be a sequence of IPs contained in $\hat{J}^{+} (I^{-} ((t,r))) \cap \hat{J}^{-} (Q)$. This sequence cannot contain any terminal set. Therefore $R_{n} = I^{-} ((t_n,r_n))$ and
\[ I^{-} ((t,r)) \subset I^{-} ((t_n,r_n)) \subset Q, \]
for every $n$. This implies that $(t,r) \leq (t_n, r_n)$ and by consequence $d(s,r_n) < L_0$, that is, $r_n \in \overline{B_{L_0}^{d}(s)}$.

By completeness there exists a converging subsequence $r_{n_k} \to w$ with $w \in \overline{B_{L_0}^{d}(r)}$, we omit the subsequence for writing purposes. Then we have two possibilities: 

\begin{enumerate}[label = (\roman*)]
\item $t_n \to t_0 < b$
\item $t_n \nearrow b$
\end{enumerate}

In (i) we have $(t_n , r_n) \to (t_0,w)$. Since $(t,r) \ll (t_n,r_n)$ for every $n$, we have $d (r,w) \leq \int_{t}^{t_0} \frac{ds}{\sqrt{\alpha (s)}}$ which implies $(t,r) \leq (t_0,w)$.

In case (ii), since $I^{-}((t_{n},x_n)) \subset Q:=I^{-}(b,x_Q)$ we have the following integral condition for large $n \in \mathbb{N}$:
\[
	\int_{t_{n}}^{b} \displaystyle{\frac{ds}{\sqrt{\alpha(s)}}} \geq d(x_n,x_Q)
\]
Observe that both the integral condition $\int_{c}^{b} \displaystyle{\frac{ds}{\sqrt{\alpha(s)}}}<\infty$ and the convergence of $\{x_{n}\}$ to $w \in M$ lead to 
\[
0\geq d(\lim_{n\rightarrow} x_{n},x_Q)=d(w,x_{Q}) \geq 0
\]
 Therefore, $w=x_{Q}$ and thus the sequence $\{(t_{n},x_{n})\}$ converges to $(b,x_{Q})$ in the future causal completion and this proves thar $\{I^{-}((t_{n},x_{n}))\}$ converges to $Q \in \hat{J}^{+}(I^{-}(t,r)) \cap Q$. 
\end{proof}

\subsection{Case B. $\int_{c}^{b} \frac{ds}{\sqrt{\alpha (s)}} = \infty$. }

In this section we prove the global hyperbolicity of the future causal completion $\hat{M}$ of a GRW spacetime $M = (a,b) \times_{\alpha} S$ with $(S,h)$ a complete Riemannian manifold and $\int_{c}^{b} \frac{ds}{\sqrt{\alpha(s)}} = \infty$. We further assume that the chronological topology is Hausdorff and therefore coincides with the CLT topology for $\hat{M}$.

We begin with a lemma which describes the behavior of causally related TIPs.

\begin{lemma}\label{lemma:busemanncoinciden}
If $P,Q \in \hat{\partial} M$ with $P \hat{\leq} Q$ then their classes in the Busemann boundary coincide.\footnote{This phenomenon can be interpreted as $\hat{M}$ being a lightlike cone. Note that this is the case for spacetimes when one works with null hypersurfaces: events on the null hypersurface are either on a null generator or spacelike related. Compare also with \cite[Prop. 6.23 (i)]{GromovCauchy}.}
\end{lemma}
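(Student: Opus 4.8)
The plan is to work directly with the characterization of chronology on the GRW spacetime and the explicit description of TIPs in Case B given by Theorem \ref{teo:estM}(1). Recall that in this case each $P\in\hat\partial M$ is (up to the identifications of Theorem \ref{teo:estM}) the chronological past of an inextensible future chain $\{(t_n,x_n)\}$ with $t_n\nearrow b$; such a chain determines, via the spatial curve $s\mapsto x(s)$ obtained by reparametrizing $x_n$ by arclength of the "conformal time" $\sigma(t)=\int_c^t ds/\sqrt{\alpha(s)}$, a Busemann function $b_{c}$ on $(S,h)$, and the class $[b_c]\in\partial_B S$ is exactly the boundary point labelling $P$. Concretely, $(t_0,x_0)\in P$ iff $\sigma(t_0) < b_c(x_0)$ (after normalizing the additive constant of $b_c$ so that it matches the parametrization of the generating chain). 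So the statement to prove becomes: if $P\subset Q$ then the associated Busemann functions agree up to an additive constant, i.e. $\{x : \sigma(t_0)<b^{P}(x)\}\subset\{x:\sigma(t_0)<b^{Q}(x)\}$ for all $t_0$ forces $b^{Q} = b^{P}+k$ for some $k\ge 0$.

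The key steps, in order, are as follows. First I would fix inextensible future chains $\{(s_n,y_n)\}$ generating $P$ and $\{(t_m,z_m)\}$ generating $Q$, with $s_n,t_m\nearrow b$, and pass to the limit curves $y(\cdot)$, $z(\cdot)$ in $S$ defining the Busemann functions $b^P$, $b^Q$. Second, I would unpack $P\hat\leq Q$, i.e. $P\subset Q$: every point chronologically below some $(s_n,y_n)$ lies chronologically below some $(t_m,z_m)$. Translating through the chronology characterization $d(x_0,x_1)<\int_{t_0}^{t_1}ds/\sqrt{\alpha(s)}$, this says that for each $n$ there is $m=m(n)$ with $B_{r}(y_n)\subset B_{r'}(z_m)$ for the appropriate conformal-time radii, which by the triangle inequality yields an inequality of the form $d(y_n,z_{m(n)}) \le \sigma(t_{m(n)})-\sigma(s_n) + (\text{error}\to 0)$. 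Third, I would feed this into the definitions of the Busemann functions: evaluating $b^P$ and $b^Q$ at an arbitrary $x\in S$ and using $b_c(x)=\lim_{s}(s-d(x,c(s)))$ together with the reverse triangle inequality $|d(x,y_n)-d(x,z_{m(n)})|\le d(y_n,z_{m(n)})$, the inclusion-derived estimate shows $b^{Q}(x)\ge b^{P}(x)+k$ where $k=\lim(\sigma(t_{m(n)})-\sigma(s_n))\in[0,\infty]$ (it is finite because $b^P,b^Q$ are \emph{finite} Busemann functions, being in $\mathcal B(S)$). Finally, one shows the reverse inequality $b^Q(x)\le b^P(x)+k$: this follows because the past set $Q=I^-(\{(t_m,z_m)\})$ cannot be \emph{strictly} larger than $I^-(\{(s_n,y_n)\})$ "in direction" without the two generating curves diverging in $(S,h)$ in a way incompatible with both defining finite Busemann functions — more precisely, by symmetry one uses that maximality of $P$ as an IP inside $Q$ (combined with the Hausdorffness hypothesis, which forces $\hat{\mathcal T}_{chr}=\hat{\mathcal T}_c$ and hence good behaviour of the limit operators) pins down $k$ from both sides. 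Hence $[b^P]=[b^Q]$ in $\partial_B S$.

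I expect the main obstacle to be the last step: the inclusion $P\subset Q$ only gives one inequality between the Busemann functions for free, and extracting the matching reverse inequality requires genuinely using that $P$ is an \emph{indecomposable} past set sitting inside $Q$ — if $b^Q$ were strictly larger than $b^P+k$ on some nonempty open set, one should be able to decompose $Q$ or produce a TIP strictly between $P$ and $Q$ violating the structure of $\hat\partial M$ described in Theorem \ref{teo:estM}(1) (an infinite null cone with timelike fibres only over $\partial_C S=\emptyset$ here, since $(S,h)$ is complete). I would make this precise by choosing, for a putative $x_\ast$ with $b^Q(x_\ast) > b^P(x_\ast)+k$, a chain heading "towards $x_\ast$" whose past is an IP strictly containing $P$ and strictly contained in $Q$, contradicting that $\hat\partial M$ over a complete $S$ has no nontrivial causal relations among its TIPs beyond equality. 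A secondary technical point worth care is the bookkeeping of additive constants in the Busemann functions versus the parametrizations $\sigma(s_n)$, $\sigma(t_m)$ of the generating chains; handling this cleanly is what makes "their classes coincide" (rather than the functions themselves) the correct statement.
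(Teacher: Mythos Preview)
The paper's proof is far shorter and proceeds by a direct contrapositive rather than your constructive approach. It represents $P=P(b_\alpha)$ and $Q=P(b_\beta)$ by their Busemann functions, assumes $[b_\alpha]\neq[b_\beta]$, takes a point $x\in S$ with $b_\alpha(x)-b_\beta(x)=k_2>0$, and then uses $\int_c^b ds/\sqrt{\alpha(s)}=\infty$ to pick $t_0$ with $b_\alpha(x)>\int_c^{t_0}ds/\sqrt{\alpha(s)}>b_\beta(x)$. This single point witnesses $(t_0,x)\in P\setminus Q$, contradicting $P\hat\leq Q$. No generating chains, no conformal-time radii, no triangle-inequality estimates.

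Your argument has a genuine gap precisely where you flag it. The proposed closure of the reverse inequality is circular: you plan to contradict that ``$\hat\partial M$ over a complete $S$ has no nontrivial causal relations among its TIPs beyond equality,'' but that statement \emph{is} the lemma. Theorem~\ref{teo:estM}(1) gives the point-set and topological structure of $\hat M$, not the causal order on $\hat\partial M$; the present lemma is what pins that order down. Your alternative --- producing a TIP strictly between $P$ and $Q$ --- would require exhibiting a Busemann function strictly between $b^P+k$ and $b^Q$, and nothing in your setup manufactures one. The Hausdorffness hypothesis controls the limit operators $\hat L_{chr}$ and $\hat L_H$, not comparability of horofunctions, so it does not rescue this step either.

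There is also a softer issue in your third step: the constant $k=\lim(\sigma(t_{m(n)})-\sigma(s_n))$ depends on the non-canonical choice of $m(n)$ and need not be well defined or positive; all that $P\subset Q$ really yields is the pointwise inequality $b^P\le b^Q$. That pointwise inequality is already exactly the input to the paper's contrapositive, so the chain machinery you build is unnecessary overhead.
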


\begin{proof}
Let $\alpha,\beta$ be curves in $S$ such that $P = P(b_{\alpha})$ and $Q = P(b_{\beta})$, that is, $P$ and $Q$ are represented by the Busemann functions associated to $\alpha$ and $\beta$, respectively. Assume $[b_{\alpha}] \neq [b_{\beta}]$ and consider
\[
b_{\alpha}(x)-b_{\beta}(x)=k_{2}>0.
\]
Thus, as $\int_{c}^{b} \frac{ds}{\sqrt{\alpha(s)}}=+\infty$, there exists  $t_{0}$ big enough such that
\[
b_{\alpha}(x)> \int_{c}^{t_0} \frac{ds}{\sqrt{\alpha(s)}}>b_{\beta}(x).
\]
Consequently, $(t_0,x) \in P$ and $(t_0,x) \not\in Q$ and the proof is complete.
\end{proof}

\begin{proposition}\label{prop:diamantescompactosinfinito}
The causal diamonds $\hat{J} (P,Q) := \hat{J}^{+} (P) \cap \hat{J}^{-} (Q)$ are compact for every $P,Q \in \hat{M}$.
\end{proposition}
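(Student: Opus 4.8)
The plan is to treat the causal diamond $\hat{J}(P,Q)$ case by case according to which of $P,Q$ are PIPs and which are TIPs, exactly mirroring the structure of the Case A argument, and to reduce the analysis to sequential compactness (legitimate since $\hat{\calT}_c = \hat{\calT}_{chr}$ is metrizable here by the Hausdorff hypothesis). If $\hat{J}(P,Q)$ is empty there is nothing to prove, so assume $P \hat{\leq} Q$. If both are PIPs, say $P = I^-(p)$ and $Q = I^-(q)$ with $p \leq q$, then $\hat{J}(P,Q) = i(J^+(p)\cap J^-(q))$ and compactness follows from global hyperbolicity of $M$ together with continuity and properness of $i$ on the manifold topology (Theorem \ref{teo:clt}(i)). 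The case $P \in \hat{\partial}M$, $Q \in i(M)$ is impossible by Lemma \ref{lemadesigualdad}. So the two remaining cases are $P \in i(M)$, $Q \in \hat{\partial}M$, and $P, Q \in \hat{\partial}M$.

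For the mixed case $P = I^-((t,r))$ and $Q \in \hat{\partial}M$, I would take a sequence $\{R_n\} \subset \hat{J}(P,Q)$; by Lemma \ref{lemadesigualdad} each $R_n$ with $P \hat{\leq} R_n$ must itself be a TIP or a PIP, but since $R_n \hat{\leq} Q$ and $Q$ is a TIP, Lemma \ref{lemadesigualdad} forces: either $R_n = I^-((t_n,r_n))$ is a PIP, or $R_n \in \hat{\partial}M$ and then by Lemma \ref{lemma:busemanncoinciden} $R_n$ lies on the same null line as $Q$ (has the same Busemann class). In either sub-case, represent $Q$ by a future chain $(s_m, x_Q)$ over a fixed point $x_Q \in S$ and use the characterization of $\ll$ in GRW spacetimes together with $(t,r) \in Q$ to extract, for the PIP points, a bound $d(r, r_n) \leq \int_t^{t_n} ds/\sqrt{\alpha(s)}$ and the inclusion $R_n \subset Q$. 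Now I split on whether the time coordinates $t_n$ stay bounded away from $b$ or approach $b$. If $t_n \to t_0 < b$, then $r_n$ ranges in a fixed metric ball (here $(S,h)$ complete, so closed balls are compact — this is the Hopf–Rinow input, and it is available since global hyperbolicity forces completeness of $S$, as noted after Theorem \ref{teo:estM}), extract $r_{n_k} \to w$, pass the chronology inequality to the limit to get $(t,r)\hat{\leq}(t_0,w)$ and $(t_0,w)\hat{\leq}Q$, and conclude $(t_0,w) \in \hat{J}(P,Q)$ with $R_{n_k} \to I^-((t_0,w))$ in $\hat{\calT}_c$. If $t_n \nearrow b$, I claim $R_n \to Q$: since $\int_c^b ds/\sqrt{\alpha} = \infty$, the condition $R_n \subset Q$ combined with $t_n \nearrow b$ forces the spatial projections $r_n$ to converge to the Busemann class of $Q$, and by Lemma \ref{lemma:busemanncoinciden}-type reasoning the limit IP in $LI = LS$ is exactly $Q$ — so the limit point again lies in the diamond. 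The terminal sub-case ($R_n \in \hat{\partial}M$ on the null line through $Q$) is handled similarly: such $R_n$ are linearly ordered below $Q$ and their "heights" along the line either stabilize (limit is one of them) or increase to the apex $Q$.

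For the purely boundary case $P, Q \in \hat{\partial}M$ with $P \hat{\leq} Q$, Lemma \ref{lemma:busemanncoinciden} tells us $P$ and $Q$ represent the same Busemann class, so the entire diamond $\hat{J}(P,Q)$ is contained in the single null line (or null-cone generator) through that boundary point, parametrized essentially by the additive constant $k \in [k_P, k_Q]$ in the Busemann representative; any sequence in this compact interval of parameters subconverges, and the correspondence $k \mapsto P(b_\alpha + k)$ is continuous into $(\hat{M},\hat{\calT}_c)$, giving compactness. I expect the main obstacle to be the mixed case with $t_n \nearrow b$: showing rigorously that $\{I^-((t_n,r_n))\} \to Q$ requires carefully combining the divergence of the integral, the inclusion $I^-((t_n,r_n)) \subset Q = I^-(b_\beta)$, and the definition of the Busemann function to pin down $LI = LS = Q$, and one must be attentive that $(S,h)$ being complete kills the Cauchy boundary so that $Q$ genuinely corresponds to an infinity direction and the "height along the null line" bookkeeping is consistent. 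Everything else is either a direct citation (the PIP–PIP case, impossibility of TIP $\hat{\leq}$ PIP) or a routine Hopf–Rinow compactness argument on fixed metric balls.
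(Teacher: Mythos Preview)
Your case split and the handling of the PIP--PIP, TIP--PIP (impossible), and TIP--TIP cases match the paper. The substantive difference is in the mixed PIP--TIP case: the paper does \emph{not} mimic the Case~A sequential argument. Instead it observes that every causal curve is a $d_c$-distance realizer (Remark~\ref{rem:instrinsic}), so $\hat{J}(P,Q)$ has $d_c$-diameter at most $2d_c(P,Q)$; since $\hat{J}^\pm$ are closed (Corollary~\ref{coro:causally_simple}), the diamond is closed and bounded, and compactness follows from the Heine--Borel property of the Cauchy--Gromov completion, which coincides with $\hat M$ under the Hausdorff hypothesis (citing \cite[Cor.~4.13, Thm.~5.39]{GromovCauchy}). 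This bypasses all coordinate analysis.

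Your direct approach has a genuine gap in the sub-case $t_n\nearrow b$. The claim ``$R_n\to Q$'' is false in general. Already in Minkowski $\mathbb{R}^{1,1}$ with $P=I^-((0,1))$ and $Q=\{(s,y):y>s\}$, the sequence $R_n=I^-((n,n+\tfrac12))$ lies in $\hat J(P,Q)$ but converges to $Q_{1/2}=\{(s,y):y>s+\tfrac12\}\subsetneq Q$. What you actually need is that a subsequence converges to \emph{some} TIP on the null generator through $Q$ lying between $P$ and $Q$, and for that you must control the ``height'' parameter along the generator. In the infinite-integral regime the Hopf--Rinow bound $d(r,r_n)\le\int_t^{t_n}ds/\sqrt{\alpha}$ becomes vacuous as $t_n\to b$, so the Case~A mechanism does not transfer; extracting a convergent subsequence here is precisely a precompactness statement about the boundary, which is what the paper imports from the Gromov completion theory rather than proving by hand.
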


\begin{proof}
If $P$, $Q$ are both proper then $\hat{J} (P,Q) = J^{+} (p) \cap J^{-} (q)$, which is compact due to global hyperbolicity of $M$.

If $P$, $Q$ are both terminal with $\hat{J} (P,Q) \neq \emptyset$ then $P \hat{\leq} Q$, which implies along with Lemma \ref{lemma:busemanncoinciden} that $P$ and $Q$ lie on the same null line on the cone that makes $\hat{\partial} M$. Then $\hat{J} (P,Q)$ is a segment of a null line, which is compact.

Now assume that $P$ is proper and $Q$ is terminal. We know that for any $R \in \hat{J} (P,Q)$ there is a causal curve $\gamma : [0,1] \to \hat{M}$ and $c \in [0,1]$ such that
$$ \gamma (0) = P = I^{-} (p), \quad \gamma (c) = R, \quad \text{and} \quad \gamma (1) = Q.$$
Then $d(P,R) \leq d (P,Q)$ since any causal curve is a distance realizer for $d_c$ (recall Remark \ref{rem:instrinsic}.  Thus for any $R,S \in \hat{J}(P,Q)$ we get by the triangle inequality,
$$d_c (R,S) \leq 2 d_c (P,Q).$$
Thus $\hat{J} (P,Q)$ is bounded. It is also closed since, by Lemma \ref{coro:causally_simple}, both $\hat{J}^+ (P)$ and $\hat{J}^- (Q)$ are closed. 

Given that $\calT _{chr}$, the Gromov and the Busemann completions coincide by \cite[Theorem 5.39]{GromovCauchy} as point sets and topologically. Moreover, the (Cauchy-)Gromov completion satisfies the Heine-Borel property (see \cite[Corollary 4.13]{GromovCauchy}. Thus $\hat{J}(P,Q)$ is compact.
\end{proof}

We proceed to prove almost strong causality for a GRW spacetime with infinite integral and Hausdorff chronological topology.

\begin{proposition}\label{prop:almost_strong_causality_infinito}
Let $M = (a,b) \times_\alpha S$ be a globally hyperbolic GRW spacetime with $\int_{c}^{b} \frac{ds}{\sqrt{\alpha (s)}} = \infty$. Moreover, assume that the chronological topology $\hat{\mathcal{T}}_{chr}$ for $\hat{M}$ is Hausdorff. Then for any $P \in \hat{M}$ and any neighbourhood $U$ of $P$ there exists an open neighbourhood $V \subset U$ of $P$ which is causally convex.
\end{proposition}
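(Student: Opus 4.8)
The plan is to mimic the structure of the proof of Proposition~\ref{prop:weakstrongcausality} (Case A), splitting into the cases $P \in i(M)$ and $P \in \hat{\partial}M$. If $P \in i(M)$, strong causality of $M$ together with the fact that the induced topology on $i(M)$ is the manifold topology (Theorem~\ref{teo:clt}(i)) gives a causally convex (in $M$) neighbourhood; one then has to check it remains causally convex in $\hat{M}$, which follows because no causal curve can leave $i(M)$, enter $\hat{\partial}M$, and come back: by Lemma~\ref{lemadesigualdad}, once a future causal curve meets $\hat{\partial}M$ it stays there. The substantive case is $P \in \hat{\partial}M$, i.e. $P$ is a TIP. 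Let $\{p_n\}$ be a future chain generating $P$, so that $\{I^-(p_n)\}$ converges to $P$ in $\hat{\mathcal{T}}_{chr} = \hat{\mathcal{T}}_c$ (here we use the Hausdorff hypothesis to identify the two topologies and Theorem~\ref{teo:clt}(iii)). As in Case~A, each $\hat{I}^+(I^-(p_n))$ is an open neighbourhood of $P$, and by Example~\ref{lemma:j+compcausallyconvex} (and Lemma~\ref{lemma:i+causallyconvex}) these sets are causally convex.

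So the goal reduces to showing $\hat{I}^+(I^-(p_n)) \subset U$ for some large $n$, and I would argue this by contradiction exactly as in Proposition~\ref{prop:weakstrongcausality}: if not, pick $R_n \in \hat{I}^+(I^-(p_n)) \setminus U$, with witnesses $r_n \in R_n \setminus I^-(p_n)$ satisfying $I^-(p_n) \subset I^-(r_n)$. For any $p \in P$ one has $p \ll p_n \ll r_n$ for $n$ large, hence $p \in LI(\{R_n\})$, so $P \subset LI(\{R_n\})$. The delicate point — and the place where Case~B genuinely differs from Case~A — is upgrading this to: \emph{$P$ is maximal in $LS(\{R_n\})$}, which would give $P \in \hat{L}_{chr}(\{R_n\})$, forcing $R_n \to P$ and contradicting $R_n \notin U$. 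In Case~A the structure of $\hat{\partial}M$ (a single copy of $S$ at level $b$, with no causal relations among boundary points) made this automatic. Here, by Theorem~\ref{teo:estM}(1), $\hat{\partial}M$ is a null cone: boundary points over the same Busemann class sit on a common null generator, and by Lemma~\ref{lemma:busemanncoinciden} two causally related TIPs share a Busemann class. I would use this to control $LS(\{R_n\})$: any IP $P'$ with $P \subsetneq P' \subset LS(\{R_n\})$ would have to be a TIP (it strictly contains a PIP and hence is not a PIP, using Lemma~\ref{lemadesigualdad}) lying on the same null generator as $P$ and above it; then one shows the $R_n$, which are ``trapped'' below $Q$-type data coming from the $r_n$, cannot accumulate to something strictly above $P$ on the generator — essentially because the chain $\{p_n\}$ already exhausts that generator up to $P$. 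This is the main obstacle: making the ``no room above $P$ on the cone'' argument precise, presumably via the integral characterization of $\ll$ and the Busemann-function description of the generator, in the spirit of Lemma~\ref{lemma:busemanncoinciden}.

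Once the maximality of $P$ in $LS(\{R_n\})$ is established, the contradiction is immediate and the proof closes: we obtain $N$ with $\hat{I}^+(I^-(p_N)) \subset U$, and $V := \hat{I}^+(I^-(p_N))$ is the desired causally convex open neighbourhood of $P$ contained in $U$. I would also remark that the Hausdorff assumption is used twice and is essential: once to equate $\hat{\mathcal{T}}_{chr}$ with the metrizable $\hat{\mathcal{T}}_c$ so that ``convergence'' and ``neighbourhood'' behave well, and once implicitly in guaranteeing that the limit operator $\hat{L}_{chr}$ is well-behaved enough that $R_n \to P$ genuinely follows from $P \in \hat{L}_{chr}(\{R_n\})$ (cf.\ the discussion preceding Theorem~\ref{teo:clt}).
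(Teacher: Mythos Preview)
Your proposal follows Case~A too closely, and this leads to a concrete failure. In Case~B the boundary is a null cone, so for a TIP $P\neq i^+$ there exist TIPs $Q$ with $P\subsetneq Q$ (all lying on the same null generator, by Lemma~\ref{lemma:busemanncoinciden}). Every such $Q$ belongs to $\hat{I}^+(I^-(p_n))$ for \emph{every} $n$: indeed $p_{n+1}\in P\subset Q$, $p_{n+1}\notin I^-(p_n)$ and $I^-(p_n)\subset I^-(p_{n+1})$, so $p_{n+1}$ is a witness for $I^-(p_n)\,\hat{\ll}\,Q$. Hence the candidate neighbourhoods $\hat{I}^+(I^-(p_n))$ all contain the whole future segment of the null generator above $P$ (including $i^+$), and therefore can never be contained in a small $U$. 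Concretely, choosing $R_n$ to be a fixed $Q\supsetneq P$ with $Q\notin U$ shows at once that your maximality claim ``$P$ is maximal in $LS(\{R_n\})$'' is false: here $LS(\{R_n\})=Q$ and $Q$ itself is an IP strictly containing $P$. (A minor slip: your parenthetical ``it strictly contains a PIP'' should read ``it strictly contains a TIP''; PIPs can of course strictly contain other PIPs.)

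The paper supplies exactly the missing idea: one must \emph{cut off from above}. Choose a TIP $Q\in U$ with $P\subsetneq Q$ and a point $q\in Q$ with $q\notin P$, and replace your candidate sets by
\[
B_n=\hat{I}^+(I^-(p_n))\cap\bigl(\hat{J}^+(I^-(q))\bigr)^{c},
\]
which are still open causally convex neighbourhoods of $P$ (Example~\ref{lemma:j+compcausallyconvex}). The contradiction argument then proceeds not via maximality in $LS$, but via compactness of causal diamonds (Proposition~\ref{prop:diamantescompactosinfinito}, already proved): one shows $P\,\hat{\leq}\,R_n\,\hat{\leq}\,Q$ for large $n$ (the second inequality uses Lemma~\ref{lemma:busemanncoinciden} together with $R_n\notin\hat{J}^+(I^-(q))$), so the $R_n$ lie in the compact set $\hat{J}(P,Q)$ and a subsequence converges, contradicting $R_n\notin U$. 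The heuristic you sketch (``no room above $P$ on the cone'') is in the right spirit, but it has to be implemented at the level of the candidate neighbourhood, not at the level of the limit operator.
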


\begin{proof}
Similar to Proposition \ref{prop:almost_strong_causality_infinito}, if $P = I^{-} (p)$, global hyperbolicity (therefore strong causality) of $M$ gives us the result. 

Let $P \in \hat{\partial} M$ and $U$ be an open neighbourhood of $P$. If $\{p_n\}$ is a future chain generating $P$ then the sequence $\{I^{-} (p_n)\}$ converges to $P$ with the CLT topology, which coincides with the chronological topology by \cite[Thm 5.3]{clt}. 

Take $Q$ a TIP in $U$ such that $P \subsetneq Q$ (this means $P \hat{<} Q$) and let $q \in Q$ such that $I^{-} (q) \in U, q \notin P$ and $q \notin I^{+} (p_n)$ for $n$ large enough. Note that the sets $B_n = \hat{I}^{+}(I^{-} (p_n)) \cap (\hat{J}^{+} (I^{-}(q)))^{c}$ are causally convex open neighbourhoods of $P$ for every $n$ large enough (recall Example \ref{lemma:j+compcausallyconvex}). We only need to prove that for some large $n$, $B_n \subset U$.

Proceeding by contradiction, we assume that $B_n \nsubset U$ for every $n$. Then for every $n$ there exists $R_n \in B_n$ with $R_n \notin U$. We will prove that $R_n$ converges to some $R \in U$, which is a contradiction to the open character of $U$. Since $R_n \in \hat{I}^{+} (I^{-}(p_n))$ then for every $p \in P$, there exist $p_n$ and $r_n \in R_n$ such that
$$p \ll p_n \ll r_n,$$
which implies that $p \in R_n$ for all $n$ large enough, and $P \subset R_n$ for all $n$ large enough. This means that $P \hat{\leq} R_n$ and, using Lemma \ref{lemma:busemanncoinciden}, $ R_n \hat{\leq} Q$, since otherwise we would have  that $I^{-} (q) \hat{\ll} Q \hat{\leq} R_n$ which contradicts $R_n \in B_n$. Then $R_n \in \hat{J} (P,Q)$ for $n$ large enough and by Proposition \ref{prop:diamantescompactosinfinito} there exists a subsequence of $\{R_n\}$ that converges to $R \in \hat{J} (P,Q) \subset U$. This is a contradiction to the open character of $U$. Thus, $B_n \subset U$ for some large $n$. 
\end{proof}

By combining Propositions \ref{prop:diamantescompactosinfinito} and \ref{prop:almost_strong_causality_infinito}, together with $d_c$-compatibility, give us our main result.

\begin{theorem}
    Let $M = (a,b) \times S$ be a globally hyperbolic GRW spacetime with $\int_{c}^{\infty} \frac{ds}{\sqrt{\alpha (s)}} = \infty$ such that the chronological topology $\hat{\calT}_{chr}$ is Hausdorff. Then $(\hat{M}, d_c, \hat{\ll}, \hat{\leq}, \htau)$ is a globally hyperbolic Lorentzian pre-length space.
\end{theorem}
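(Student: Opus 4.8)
The plan is to verify the two defining conditions of a globally hyperbolic Lorentzian pre-length space directly from the results already assembled, since almost all the work has been localized into the preceding propositions. First I would recall that $(\hat M, d_c, \hat\ll, \hat\le, \htau)$ is already known to be a Lorentzian pre-length space (the chronological and causal axioms were checked right after the definition of $\hat\ll$, $\hat\le$, and the lower semicontinuity of $\htau$ with respect to $d_c$ is the content of \cite[Thm. 12]{complls}, invoked earlier as ``$d_c$-compatibility''). So the statement reduces to establishing: (1) compactness of the causal diamonds $\hat J(P,Q)=\hat J^+(P)\cap\hat J^-(Q)$ for all $P\hat\le Q$; and (2) non-total imprisonment, i.e.\ every causal curve confined to a compact set has bounded $d_c$-length.

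Condition (1) is exactly Proposition \ref{prop:diamantescompactosinfinito}, so nothing further is needed there; I would simply cite it. For condition (2), the strategy is the standard one from Lorentzian length space theory: I would invoke \cite[Thm. 3.26(iii)]{KSlls}, whose hypotheses are that the space be locally causally closed and almost strongly causal (plus the metric compatibility of the causal structure, which holds here via $d_c$). Local causal closedness of $\hat M$ is Lemma \ref{lemma:lcc} (proved for arbitrary globally hyperbolic $M$, hence applicable in Case B), and almost strong causality at every $P\in\hat M$ is Proposition \ref{prop:almost_strong_causality_infinito}, which used precisely the Hausdorffness of $\hat{\calT}_{chr}$ and the compactness of diamonds. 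Combining these three ingredients gives non-total imprisonment, completing the verification of both axioms.

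Concretely, the proof I would write is short: begin by noting $(\hat M,d_c,\hat\ll,\hat\le,\htau)$ is a Lorentzian pre-length space by \cite[Thm. 12]{complls} together with Theorem \ref{teo:clt}; then state that compactness of causal diamonds is Proposition \ref{prop:diamantescompactosinfinito}; then that, since $\hat M$ is locally causally closed (Lemma \ref{lemma:lcc}) and almost strongly causal (Proposition \ref{prop:almost_strong_causality_infinito}) and $d_c$-compatible, the argument of \cite[Thm. 3.26(iii)]{KSlls} yields non-total imprisonment; and conclude that both defining properties of global hyperbolicity hold. The main subtlety — already absorbed into Proposition \ref{prop:almost_strong_causality_infinito} — is the role of the Hausdorff hypothesis: without it $\hat{\calT}_{chr}$ need not coincide with the metrizable topology $\hat{\calT}_c$, and the convergence extracted from diamond-compactness (which lives in $\hat{\calT}_c$) could not be fed back into the chronological limit operator; so I would make sure the statement of the theorem keeps that hypothesis explicit and that the proof points to where it is used.

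Thus the expected obstacle is not in assembling the final theorem — that is a two-line citation argument — but rather lies upstream, in Proposition \ref{prop:diamantescompactosinfinito} (reducing the mixed proper/terminal diamond to a bounded closed subset of a Heine--Borel Gromov completion) and in the contradiction argument of Proposition \ref{prop:almost_strong_causality_infinito} (squeezing a putative escaping sequence $R_n$ into the compact diamond $\hat J(P,Q)$ via the auxiliary sets $B_n=\hat I^+(I^-(p_n))\cap(\hat J^+(I^-(q)))^c$ and Lemma \ref{lemma:busemanncoinciden}). Since those are granted as already-proven, the final theorem follows formally.
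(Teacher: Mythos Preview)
Your proposal is correct and matches the paper's own proof, which is the single sentence ``By combining Propositions \ref{prop:diamantescompactosinfinito} and \ref{prop:almost_strong_causality_infinito}, together with $d_c$-compatibility, give us our main result.'' You have simply unpacked this citation argument with a bit more care (explicitly flagging Lemma \ref{lemma:lcc} and the invocation of \cite[Thm.~3.26(iii)]{KSlls} for non-total imprisonment), but the route is identical.
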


\section*{Acknowledgments}

L. Ake was supported by Secihti SNII 367994 and CBF-2025-I-2376. D. A. Solis was supported by Secihti SNII 38368 and and CBF-2025-I-2376. S. Burgos acknowledges the support of IMAG-Mar\'{\i}a de Maeztu grant CEX2020-001105-M (funded by MCIN/AEI/10.13039/50110001103) and is also partially supported by the project PID2020-116126GB-I00 (MCIN/AEI/10.13039/501100011033). S. Burgos and D. A. Solis acknowledge the support of BIRS-IMAG to attend the ``Geometry, Analysis, and Physics in Lorentzian Signature" workshop. This work was also partially funded by Austrian Science Fund (FWF) [Grant DOI 10.55776/EFP6].

\bibliographystyle{plain}

\bigskip

\noindent\textsc{Luis Ake Hau.}\\
Instituto Tecnol\'ogico Superior de Valladolid.\\
97780 Valladolid, M\'exico.\\
luis.ah@valladolid.tecnm.mx

\medskip

\noindent\textsc{Saul Burgos.}\\
Departamento de Geometr\'ia y Topolog\'ia, Universidad de Granada.\\
Instituto de Matemáticas IMAG
Granada, Spain.\\
sburgos@ugr.es

\medskip

\noindent\textsc{Didier A. Solis.}\\
Facultad de Matem\'aticas, Universidad Aut\'onoma de Yucat\'an.\\
97203, M\'erida, M\'exico.\\
didier.solis@correo.uady.mx

\end{document}